\begin{document}

\title{Certified Verification of Relational Properties\thanks{
Part of this work was funded by the AESC project supported by the Ministry of
Science, Research and Arts Baden-W\"{u}rttemberg (Ref: 33-7533.-9-10/20/1).}
}

\author{
	\mbox{}
	Lionel Blatter\inst{1} \and
	Nikolai Kosmatov\inst{2,3}$^{(0000-0003-1557-2813)}$ \and\\ 
	Virgile Prevosto\inst{2}$^{(0000-0002-7203-0968)}$ \and\\
	Pascale Le Gall\inst{4}$^{(0000-0002-8955-6835)}$}
\institute{
	Karlsruhe Institute of Technology \\
	\email{firstname.lastname@kit.edu}
	\and
	Université Paris-Saclay, CEA, List, 91120, Palaiseau, France\\
	\email{firstname.lastname@cea.fr}
	\and    
	Thales Research \& Technology, 91120, Palaiseau, France 
	\and
	CentraleSupélec, Université Paris-Saclay, 91190 Gif-sur-Yvette France\\
	\email{firstname.lastname@centralesupelec.fr}
}

\maketitle
\vspace{-3mm}
\begin{abstract}
  The use of function contracts to specify the behavior of functions
  often remains limited to the scope of a single function call.
  \emph{Relational properties} link several function calls together
  within a single specification. They can express more advanced properties of a
  given function, such as non-interference, continuity, or monotonicity.
  They can also relate calls to different functions, for instance, to show that an
  optimized implementation is equivalent to its original counterpart.
  However, relational properties cannot be expressed and
  verified directly in the
  traditional setting of modular deductive verification.
  Self-composition has been proposed to overcome this
  limitation, but  
  it requires
  complex transformations and additional separation hypotheses
  for real-life languages with pointers.
  We propose 
  a novel approach
  that is not based on code transformation and avoids those drawbacks. 
  It directly applies a verification condition generator
  to produce logical formulas that must be verified
  to ensure a given relational property.
  The approach has been fully formalized and proved sound in the \coq proof assistant.

\end{abstract}

\vspace{-5mm}
\section{Introduction}
\vspace{-2mm}
\label{sec:intro}
Modular deductive verification~\cite{DBLP:journals/cacm/Hoare69}
allows the user to prove that a function respects its formal specification.
More precisely,
for a given function $f$, any individual call to $f$ can be proved to respect the \emph{contract} of $f$,
that is, basically an implication: if the given \emph{precondition} is true before the call and the call
terminates\footnote{Termination can be assumed (partial correctness) or proved separately (full correctness) in a well-known way~\cite{Floyd1967}; 
for the purpose of this paper we can assume it.},
the given \emph{postcondition} is true after it.
However, some kinds of properties
are not easily reducible to a single function call.
Indeed, it is frequently necessary to express a property that involves several functions
or relates the results of several calls to the same function for different arguments.
Such properties are known as \emph{relational properties}~\cite{DBLP:conf/popl/Benton04}.

Examples of such relational properties include monotonicity
({i.e.} $x\leq y\Rightarrow f(x) \leq f(y)$), involving two calls,
or transitivity
($cmp(x,y) \geq 0 \wedge cmp(y,z) \geq 0 \Rightarrow cmp(x,z)\geq 0$),
involving three calls.
In secure information flow~\cite{DBLP:journals/mscs/BartheDR11},
\emph{non-interference} is also a relational property. Namely,
given a partition of program variables between high-security variables
and low-security variables,
a program is said to be non-interferent if 
any two executions starting from states in which the low-security
variables have the same initial values will end up
in a final state where the low-security variables have the same values.
In other words, high-security variables cannot interfere with low-security ones.

Relational properties can also relate calls to different functions.
For instance, in the verification of voting rules~\cite{BeckertBormerEA2016},
relational properties are used for defining specific properties (such as monotonicity, anonymity or consistency).
Notably, applying the voting rule to a sequence of ballots and a permutation of the same sequence of ballots
must lead to the same result, 
{i.e.} the order in which the ballots are passed to the
voting function should not have any impact on the outcome.

\begin{figure}[tb]
\centering
\begin{tabular}{c|c|l}
	
\begin{minipage}{3.3cm}
\begin{lstlisting}[stepnumber=0,mathescape]
//$\mbox{\rm C program }\CprogI:$
x3  = *x1;
*x1 = *x2;
*x2 = x3;
//$\mbox{\rm C program }\CprogII:$
*x1 = *x1 + *x2;
*x2 = *x1 - *x2;
*x1 = *x1 - *x2;
\end{lstlisting}	
\end{minipage}

&\ 

\begin{minipage}{4.5cm}
\begin{lstlisting}[stepnumber=0,mathescape]
//$\mbox{\rm Composed C program }\CprogIII:$
x3_1  = *x1_1;
*x1_1 = *x2_1;
*x2_1 = x3_1;

*x1_2 = *x1_2 + *x2_2;
*x2_2 = *x1_2 - *x2_2;
*x1_2 = *x1_2 - *x2_2;
\end{lstlisting}	
\end{minipage}

&\ 

\begin{minipage}{4cm}
\vspace{3mm}
\normalsize
$
\cswI\triangleq
\begin{array}{l@{}l}
\locval_3 &:= *\locval_1;\\
*\locval_1 &:= *\locval_2;\\
*\locval_2 &:= \locval_3;
\end{array}
$
\vspace{3mm}
\\
$
\cswII\triangleq
\begin{array}{l@{}l}
*\locval_1 &:= *\locval_1+ *\locval_2;\\
*\locval_2 &:= *\locval_1 - *\locval_2;\\
*\locval_1 &:= *\locval_1 - *\locval_2
\end{array}
$
\end{minipage}

\end{tabular}

%
\vspace{-1mm}
\caption{Two C programs $\CprogI$ and $\CprogII$ swapping \lstinline'*x1' and 
\lstinline'*x2', their composition $\CprogIII$, and their counterparts $\cswI$ and $\cswII$ in language $\mathcal{L}$ (defined below).}
\label{fig:ex-rel-prop-swaps-self}
\end{figure}


\vspace{-2mm}
\paragraph{Motivation.}
Lack of support for relational properties in verification tools
was already faced by industrial users (e.g. in~\cite{BishopBC13} for C programs).
The usual way to deal with this limitation is to use
\emph{self-composition}~\cite{DBLP:journals/mscs/BartheDR11,DBLP:conf/fm/SchebenS14,blatterKGP17},
product program~\cite{DBLP:conf/fm/BartheCK11} or
other self-composition optimizations~\cite{ShemerCAV2021}.
Those techniques are based on code transformations
that are relatively tedious and error-prone. Moreover, 
they are hardly applicable in practice to real-life programs 
with pointers like in C.
Namely, self-composition requires that the compared executions
operate on completely {separated} (i.e. disjoint) memory areas, which might
be extremely difficult to ensure for complex programs with pointers.

\begin{example}[Motivating Example]
\label{ex:motiv-ex}
Figure~\ref{fig:ex-rel-prop-swaps-self} shows an example of two simple C programs
performing
a swap of the values 
referred to by pointers \lstinline'x1' and \lstinline'x2' (of type \lstinline'int*').
Program $\CprogI$ 
uses an auxiliary 
variable \lstinline'x3' (of type \lstinline'int'), while 
$\CprogII$ performs an in-place swap using arithmetic operations.
As usual in that case, to work correctly, 
each of these programs needs some separation hypotheses:
pointers \lstinline'x1' and \lstinline'x2' should be \emph{separated}
(that is, point to disjoint memory locations) and must not point
to \lstinline'x1', \lstinline'x2' themselves and, for $\CprogI$, to \lstinline'x3'.


Consider a relational property, denoted $\Rsw$, stating that both programs,
executed from two states 
in which each of \lstinline'*x1' and \lstinline'*x2' has the same value, will end up in two states also
having the same values in these locations. 
To prove this relational property using self-composition, 
one typically has to generate a new C program $\CprogIII$ (see Fig.~\ref{fig:ex-rel-prop-swaps-self}) composing $\CprogI$ and $\CprogII$.
To avoid name conflicts, we rename their variables  
by adding, resp., suffixes ``\lstinline'_1''' and  ``\lstinline'_2'''.
The relational property  $\Rsw$ is then expressed by a 
contract 
%
%
of $\CprogIII$ with a precondition $P$ and a postcondition $Q$.
Obviously, both $P$ and $Q$ must include the equalities: 
\lstinline'*x1_1==*x1_2' and 
\lstinline'*x2_1==*x2_2', and 
$P$ must also require the aforementioned separation hypotheses 
necessary for each function. But for programs with pointers and aliasing, this is 
not sufficient: the user also has to specify additional separation 
hypotheses\footnote{For convenience of the reader, $P$ and $Q$ are defined in detail in Appendix~\ref{app-motiv-ex}.}
between variables coming from the different programs, that is, in our example,  that each of 
\lstinline'x1_1' and \lstinline'x2_1' is separated from each of \lstinline'x1_2' and
\lstinline'x2_2'. Without such hypotheses,
a deductive verification tool cannot show, for example, that a modification of 
\lstinline'*x1_1' does not impact \lstinline'*x1_2' in the composed program $\CprogIII$, 
and is thus unable to deduce the required property. 
For real-life programs, such separation hypotheses 
can be hard to specify or generate. It can become even more complicated for programs
with double or multiple indirections.
%
%
\qed
\end{example}



\vspace{-4mm}
\paragraph{Approach.}
This paper proposes an alternative approach 
that is not based on code transformation or relational rules.
It directly uses a verification condition  generator (VCGen)
to produce logical formulas to be verified (typically, with an
automated prover)
to ensure a given relational property. 
It requires no extra code processing 
(such as sequential composition of programs or variable renaming).
Moreover, no additional separation hypotheses---in addition to those that are anyway needed for each function to work---are required. 
The locations of each program are separated by construction: 
each program  
has its own memory state.
The language $\langname$ considered in this work 
was chosen as a minimal language representative of the 
main issues relevant for relational property verification: 
it is a standard \textsc{While} language enriched 
with annotations, procedures and pointers (see 
programs $\cswI$ and $\cswII$ in Fig.~\ref{fig:ex-rel-prop-swaps-self} for examples; we use a lower-case letter $\com$ for $\langname$ programs and a capital letter $\mathcal{C}$ for C programs).
Notably, the presence of dereferencing  and address-of operations makes
it representative of various aliasing problems 
with (possibly, multiple) pointer dereferences of a real-life
language like C.
We formalize the proposed approach and prove\footnote{The \coq development is
at 
\url{https://github.com/lyonel2017/Relational-Spec}, where the version corresponding to this paper is tagged iFM2022.}
its soundness in the \coq proof assistant~\cite{Coq}.
Our \coq development contains about 3400 lines.

%

\vspace{-2mm}
\paragraph{Contributions.}
The contributions of this paper include:
\begin{itemize}
\item a \coq formalization and proof of soundness of recursive Hoare triple verification with a verification condition generator on
a representative language with procedures and aliasing;
\item a novel method for verifying relational properties using a verification condition generator, without relying on code transformation 
(such as
self-composi\-tion) 
or making additional
separation hypotheses in case of aliasing;
\item a \coq formalization and proof of soundness of
the proposed method 
of relational property verification for the considered language.
\end{itemize}

\vspace{-2mm}
\paragraph{Outline.}
Section~\ref{sec:background}
introduces an imperative language $\langname$ used in this work.
Functional correctness 
is defined in Section~\ref{sec:funct-corr},
and relational properties in Section~\ref{sec:rel-prop}. Then, we prove
the soundness of a VCGen 
in Section~\ref{sec:verif-cond-gener},
and show how it can be soundly extended to verify relational properties 
in Section~\ref{sec:relat-prop-verif}. Finally, we 
present related work 
in Section~\ref{sec:related}
and
concluding remarks in Section~\ref{sec:conclusion}.


\vspace{-2mm}
\section{Syntax and Semantics of the \langname Language}
\vspace{-2mm}
\label{sec:background}
\label{sec:considered-language}

\subsection{Notation for Locations, States, and Procedure Contracts}
\label{sec:loc-state-contract}
\label{sec:program-grammar}

We denote by $\Natset=\{0,1,2,\dots\}$ the set of natural numbers,
by $\Natset^*=\{1,2,\dots\}$ the set of nonzero natural numbers,
and by $\Boolset = \{\True,\False\}$ the set of Boolean values.
Let $\Locval$ be the set of program \emph{locations} 
and
$\Loccom$ the set of \emph{program (procedure) names}, and let
$\locval,\locval',\locval_1, ...$ and $\loccom,\loccom',\loccom_1, ...$
denote  metavariables ranging over those respective sets.
We assume that there exists a bijective function 
$\Natset\to\Locval$, so that $\Locval=\{\locval_i\,|\,i\in\Natset\}$.
Intuitively, we can see $i$ as the \emph{address} of location $\locval_i$.

Let $\Memvar$ be the set of functions 
$\memvar:\Natset \to \Natset$, called \emph{memory states}, and let
$\memvar,\memvar',\memvar_1, ...$ denote metavariables ranging over the set.
A state $\memvar$ maps a location to a value using its address:
location $\locval_i$ has value $\memvar(i).$

We define the \emph{update} operation of a memory state
$\setmem{\mem}{i}{n}$, also denoted by $\memvar[i/n]$,
as the memory state $\memvar'$ mapping each address to the same value
as $\sigma$, except for $i$,
bound to $n$.
Formally,
$\setmem{\mem}{i}{n}$ is defined by the following rules:
\begin{eqnarray}
  \forall \memvar \in \Memvar, \locval_i \in \Locval, n \in \Natset, \locval_j \in \Locval.\ 
  i = j \Rightarrow \sigma[i/n] (j) = n,\\
  \forall \memvar \in \Memvar, \locval_i \in \Locval, n \in \Natset, \locval_j \in \Locval.\ 
  i  \neq j \Rightarrow \sigma[i/n] (j) = \memvar(j).
\end{eqnarray}

Let $\Memcom$ be the set of functions 
$\memcom:\Loccom \to \Com$,
called \emph{procedure environments},
mapping program names to commands (defined below),
and let $\memcom,\memcom_1, ...$  denote metavariables ranging over $\Memcom$.
We write $\body{\loccom}{\memcom}$ to refer to $\memcom(y)$,
the commands (or \emph{body}) of procedure $y$  for a given procedure environment $\memcom$.

\emph{Assertions} 
are predicates of arity one, taking as parameter a memory state
and returning an equational first-order logic formula.
Let metavariables $P, Q, ...$ range over the set $\Assertion$ of assertions.
For instance, using $\lambda$-notation, assertion
$P$ assessing that location $x_3$ is bound to $2$
can be defined by 
$P \triangleq \lambda \memvar. \memvar(3) = 2.$
This form will be more convenient
for relational properties (than {e.g.} $x_3 = 2$)
as it makes explicit
the memory states on which a property is evaluated.

Finally, we define the set $\Memann$ of \emph{contract environments} $\memann: \Loccom  \to \Assertion \times \Assertion$,
and metavariables $\memann,\memann_1,...$ to range over $\Memann$.
More precisely,  
$\memann$ maps a procedure name $y$ to the associated
(procedure) \emph{contract} $\memann(y)=(\pre{\loccom}{\memann},\post{\loccom}{\memann})$, composed of a pre- and a postcondition
for procedure $y$. 
As usual, a procedure contract will allow us
to specify the behavior of a single procedure call, that is,
if we start executing $\loccom$ in a memory state satisfying $\pre{\loccom}{\memann}$, and the evaluation terminates, the final state satisfies $\post{\loccom}{\memann}$.

\vspace{-3mm}
\subsection{Syntax for Expressions and Commands}
\vspace{-2mm}
\label{sec:exp-commands}

Let 
$\Aexp$, $\Bexp$ and $\Com$ denote respectively the sets of arithmetic expressions,
Boolean expressions and commands.
We denote by $\aexp,\aexp_1, ...$;
$\bexp,\bexp_1, ...$ and
$\com,\com_1, ...$ metavariables ranging, respectively, over those sets.
Syntax of arithmetic and Boolean expressions is given in Fig.~\ref{fig:exps-coms}.
Constants are natural numbers or Boolean values. 
Expressions use standard arithmetic, 
comparison and logic binary operators, denoted respectively $\opa ::= \{+, \times, - \}$, $\opb ::= \{<= , = \}$, $\opl ::= \{\lor, \land\}$.
Since we use natural values, the
subtraction is bounded by 0, as in \coq: if $n'>n$, the result of $n-n'$ is considered to be 0.
Expressions also include locations, possibly with a dereference or address operators.

\begin{figure}[tb]
\begin{minipage}{3cm}
  \begin{align*}
    \aexp :&:= \nat & \text{natural const.}\\
           &|\ \locval & \text{location}\\
           &|\ *\locval  & \text{dereference}\\
           &|\ \&\locval & \text{address}\\
           &|\ \aexp_1 \opa \aexp_2& \text{arithm. oper.}
  \end{align*}
  \begin{align*}
    \bexp :&:= true\ |\ false & \text{Boolean const.}\\
           &|\ \aexp_1 \opb \aexp_2 & \text{comparison}\\
           &|\ \bexp_1 \opl \bexp_2\ |\ \lnot \bexp_1 & \text{logic oper.}
  \end{align*}
\end{minipage}
\hspace{3mm}
\begin{minipage}{7cm}
\begin{align*}
  \com :&:= \wskip & \text{do nothing} \\
    &|\ \locval := \aexp  & \text{\!\!\!\!\!\!\!\!\!\!\!\!\!\!direct assignment}\\
    &|\,*\locval := \aexp  & \text{\!\!\!\!\!\!\!\!\!\!\!\!\!\!\!\!\!\!\!\!\!indirect assignment}\\
    &|\ \com_1;\com_2  & \text{sequence}\\
    &|\ \wassert{P}  & \text{assertion}\\
    &|\ \wif{\bexp}{\com_1}{\com_2}  & \text{condition}\\
    &|\ \wwhilei{\bexp}{P}{\com_1}  & \text{loop}\\
    &|\ \wcall{\loccom}  & \text{\!\!\!\!\!\!\!\!\!\!\!\!\!\!procedure call}
\end{align*}
\end{minipage}
\vspace{-1mm}
\caption{Syntax of arithmetic and Boolean expressions and commands in $\mathcal{L}$.}
\label{fig:exps-coms}
\end{figure}

Figure~\ref{fig:exps-coms} also presents the syntax of commands in $\mathcal{L}$.
Sequences, skip and conditions are standard.
An assignment can be done to a location directly or after a dereference.
Recall that a location $\locval_i$ contains as a value a natural number, 
say $v$, that can be seen in turn as the address of a location, namely $\locval_{v}$,
so the assignment $*\locval_i := \aexp$ writes the value of expression $a$ 
to the location $\locval_{v}$, 
while the address operation $\&\locval_i$ computes the address $i$ of $\locval_i$. 
An assertion command $\wassert{P}$ indicates that an assertion $P$ should
be valid at the point where the command occurs.
The loop command $\wwhilei{\bexp}{P}{\com_1}$ is always annotated with an invariant $P$.
As usual, this invariant should hold 
when we reach the command and be preserved by each loop step.
Command $\wcall{\loccom}$ is a procedure call. 
All annotations (assertions,  loop 
invariants and procedure contracts) will be ignored 
during the program execution 
and will be relevant only for program verification in Section~\ref{sec:verif-cond-gener}.
Procedures do not have
explicit parameters and return values (hence we use
the term \emph{procedure call} rather than \emph{function call}).
Instead, as in assembly code%
~\cite{Irvine:2014:ALX:2655333}, parameters and
return value(s) are shared implicitly between the caller and the callee through memory locations:
the caller must put/read the right values at the right locations before/after
the call. 
Finally, to avoid ambiguity, we delimit sequences of commands with $\{\,\}$.

\begin{figure}[tb]
\centering
\begin{minipage}{2cm}
\begin{align*}
\crec\triangleq\enskip
\begin{array}{l@{}l}
&x_1:= x_4;\\
&x_2:= 0;\\
&\wcall{y_1}\\
\end{array}
\end{align*}
\end{minipage}
\hspace{2cm}
\begin{minipage}{6cm}
\begin{align*}
  \memcom = \left\{\enskip
  \loccom_1 \enskip \to \enskip
  \begin{array}{l@{}l}
    &\wif{x_1 > 0}{\\
    &\qquad x_2:= x_2 + x_3;\\
    &\qquad x_1:= x_1 -1;\\
    &\qquad \wcall{\loccom_1}\\
    &}{\\
    &\qquad \wskip\\
    &}\\
  \end{array}\,, \enskip \dots \enskip
  \right\}
\end{align*}
\end{minipage}
\vspace{-1mm}
\begin{equation*}
\memann =
\left\{\,
\loccom_1 \,\to\,
\left(
\begin{array}{c}
  \lambda \mem. \mem(2) = \mem(3) \times (\mem(4)-\mem(1))
  \land 0 \leq \mem(1) \land \mem(1) \leq \mem(4),\\
  \lambda \mem. \mem(2) = \mem(3) \times \mem(4)
\end{array}
\right)\,,\enskip \dots \enskip
\right\}.
\end{equation*}
\vspace{-2mm}
\caption{Example of an \langname program $\crec$ with its environments.}
\label{fig:ex1}
\end{figure}

\begin{example}
\label{ex-mult}
Figure~\ref{fig:ex1} shows an example of a command $\crec$ and a procedure environment $\memcom$
where procedure $\loccom_1$ points to a recursive command, called in $\crec$.
With the semantics of Sec.~\ref{sec:oper-semant},
from any initial state,
the command will return a state in which $x_2 = x_3 \times x_4$. Procedure $\loccom_1$
returns a state where $x_2 = x_3 \times x_4$ if the
initial state satisfies $ x_2 = x_3 \times (x_4-x_1) \land 0 \leq x_1 \land x_1 \leq x_4$.
This can be expressed by the contract environment $\memann$ given (in 
$\lambda$-notation) in 
Fig.~\ref{fig:ex1}.
\hfil\qed
\end{example}


\begin{figure}[t]
\centering
\begin{minipage}{2.5cm}
\begin{align*}
  \EAexpf{\nat}{\mem} & \triangleq \nat
\end{align*}
\end{minipage}
\begin{minipage}{3cm}
	\begin{align*}
	\EAexpf{\locval_i}{\mem} & \triangleq \mem(i)
	\end{align*}
\end{minipage}
\begin{minipage}{3cm}
	\begin{align*}
	\EAexpf{*\locval_i}{\mem} & \triangleq \mem(\mem(i))
	\end{align*}
\end{minipage}
\begin{minipage}{3cm}
	\begin{align*}
	\EAexpf{\&\locval_i}{\mem} & \triangleq i
	\end{align*}
\end{minipage}
\vspace{-2mm}
\caption{Evaluation of expressions in $\mathcal{L}$ (selected rules).}
\label{fig:exp-eval}
\end{figure}

\begin{figure}[t]
\centering
\centering
\hspace{-4mm}
\begin{minipage}{27mm}
	\begin{equation*}
	\inferrule*
	{}
	{\eval{\wassert{P}}{\memvar}
		{\memcom}{\memvar}}
	\end{equation*}
\end{minipage} \qquad
\begin{minipage}{30mm}
  \begin{equation*}
    \inferrule*
    {\EAexpf{\aexp}{\memvar} = \nat}
    {\eval{\locval_i := \aexp}{\memvar}{\memcom}{\memvar[i/n]}}
  \end{equation*}
\end{minipage}

\begin{minipage}{35mm}
  \begin{equation*}
    \inferrule*
    {\EAexpf{\aexp}{\memvar} = \nat}
    {\eval{*\locval_i := \aexp}{\memvar}
      {\memcom}\memvar[\memvar(i)/n]}
  \end{equation*}
\end{minipage}\qquad
%
\begin{minipage}{33mm}
\begin{equation*}
  \inferrule*
  {\eval{\body{\loccom}{\memcom}}{\memvar_1}{\memcom}\memvar_2}
  {\eval{\wcall{\loccom}}{\memvar_1}{\memcom}\memvar_2}
\end{equation*}
\end{minipage}
\vspace{-2mm}
\caption{Operational semantics of commands in $\mathcal{L}$ (selected rules).}
\label{fig:oper-sem}
\end{figure}

\vspace{-5mm}
\subsection{Operational Semantics}
\vspace{-2mm}
\label{sec:oper-semant}


Evaluation of arithmetic and Boolean expressions in $\mathcal L$ is defined by functions
$\EAexp$ and $\EBexp$.
Selected evaluation rules for arithmetic expressions are shown in Fig.~\ref{fig:exp-eval}.
Operations $*\locval_i$ and $\&\locval_i$ have a
semantics similar to the C language, {i.e.} dereferencing and address-of.
Semantics of Boolean expressions is standard~\cite{DBLP:books/daglib/0070910}.

Based on these evaluation functions, we can define the operational semantics
of commands 
in a given procedure environment $\memcom$.
Selected evaluation rules\footnote{For convenience of the reader,
full versions of Fig.~\ref{fig:exp-eval},~\ref{fig:oper-sem} are given in Appendix~\ref{app-sem}.}
are shown in Fig.~\ref{fig:oper-sem}.
As said above, both assertions and loop invariants can be seen
as program annotations that
do not influence the execution of the program itself.
Hence, command $\wassert{P}$ is equivalent to a skip. Likewise,
loop invariant $P$ has no influence
on the semantics of $\wwhilei{\bexp}{P}{\com}$.

We write $\Vdash\eval{\com}{\memvar}{\memcom}{\memvar'}$
to denote that $\eval{\com}{\memvar}{\memcom}{\memvar'}$
can be derived from the rules of Fig.~\ref{fig:oper-sem}.
Our \coq formalization,
inspired by \cite{SF},
provides a deep embedding of \langname,
with an associated parser, in  files
\texttt{Aexp.v}, \texttt{Bexp.v} and \texttt{Com.v}.

\vspace{-2mm}
\section{Functional Correctness}
\vspace{-2mm}
\label{sec:funct-corr}
We define functional correctness in a similar way to the original
\emph{Hoare triple} definition~\cite{DBLP:journals/cacm/Hoare69},
except that we also need a procedure environment $\memcom$, leading
to a quadruple denoted $\hoared{P}{\com}{Q}{\memcom}$. We will however
still refer by the term ``Hoare triple'' to the corresponding
program property, formally defined as follows.
\begin{definition}[Hoare triple]
\label{def:funct-corr-1}
Let $\com$ be a command, $\memcom$ a procedure environment,
and $P$ and $Q$ two assertions. We define a Hoare triple $\hoared{P}{\com}{Q}{\memcom}$
as follows:
\[
\hoared{P}{\com}{Q}{\memcom}\,
\triangleq\,
  \forall \memvar, \memvar' \in \Memvar.\  \assert{P}{\memvar} \land
  (\Vdash\eval{\com}{\memvar}{\memcom}{\memvar'}) \Rightarrow \assert{Q}{\memvar'}.
\]

\end{definition}

Informally, our definition states
that, for a given $\memcom$, if a state $\memvar$ satisfies $P$ and
the execution of $\com$ on $\memvar$ terminates in a state $\memvar'$,
then $\memvar'$ satisfies $Q$.

Next, we introduce notation $\hoarectx{\memcom}{\memann}$
to denote the fact that, 
for the given $\memann$ and  $\memcom$,
every procedure 
satisfies its contract.

\begin{definition}[Contract Validity]
Let $\memcom$ be a procedure environment and $\memann$ a contract environment.
We define contract validity $\hoarectx{\memcom}{\memann}$ as follows:
\[\hoarectx{\memcom}{\memann} \,\triangleq\,
  \forall \loccom \in \Loccom.\
  \hoared{\pre{\loccom}{\memann}}{\wcall{\loccom}}{\post{\loccom}{\memann}}{\memcom}).
\]
\end{definition}

The notion of contract validity is at the heart of
modular verification, since it allows assuming that the contracts
of the callees are satisfied during the verification of a Hoare triple.
More precisely, to state the validity of procedure contracts
without assuming anything about their bodies in our formalization,
we will consider an arbitrary choice of implementations $\memcom'$ of procedures
that satisfy the contracts, like in assumption (\ref{hyp:lemma1}) in Lemma \ref{hoare:recursionproc}.
This technical lemma, taken from \cite[Equation (4.6)]{DBLP:series/txcs/AptBO09},
gives an alternative criterion for validity of procedure contracts: 
if, under the assumption that the contracts in $\memann$ hold, we can prove
for each procedure $y$ that its body satisfies its contract, then the contracts are
valid.

\vspace{-2mm}
\begin{lemma}[Adequacy of contracts]
  \label{hoare:recursionproc}
  Given  a procedure environment $\memcom$
  and a contract environment $\memann$ such that 
  \begin{equation}\label{hyp:lemma1}
    \forall \memcom' \in \Memcom.\ 
    \hoaredctx{\pre{\loccom}{\memann}}
    {\body{\loccom}{\memcom}}
    {\post{\loccom}{\memann}}{\forall \loccom \in \Loccom, \memcom'}{\hoarectx{\memcom'}{\memann}},
  \end{equation}
  we have \enskip
  $
    \hoarectx{\memcom}{\memann}.
  $
\end{lemma}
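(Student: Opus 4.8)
The plan is to reduce the statement to a claim about procedure bodies and then to break the apparent circularity by an induction on the height of big-step derivations. First I would observe that, by the procedure-call rule of Fig.~\ref{fig:oper-sem}, any derivation of $\Vdash\eval{\wcall{\loccom}}{\memvar}{\memcom}{\memvar'}$ has as its immediate premise a strictly shorter derivation of $\Vdash\eval{\body{\loccom}{\memcom}}{\memvar}{\memcom}{\memvar'}$, and conversely. Hence $\hoarectx{\memcom}{\memann}$ is equivalent to the assertion that, for every $\loccom$, $\hoared{\pre{\loccom}{\memann}}{\body{\loccom}{\memcom}}{\post{\loccom}{\memann}}{\memcom}$ holds. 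The naive move---instantiating the universally quantified $\memcom'$ in hypothesis~(\ref{hyp:lemma1}) with $\memcom$ itself---fails, since it would require $\hoarectx{\memcom}{\memann}$ as a premise, i.e. exactly the goal. The very point of quantifying over all $\memcom'$ is to let us feed~(\ref{hyp:lemma1}) with \emph{approximating} environments whose contract validity can be established independently.

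Concretely, I would prove by strong induction on the height $n$ of the derivation of $\Vdash\eval{\wcall{\loccom}}{\memvar}{\memcom}{\memvar'}$ the statement $H(n)$: for all $\loccom,\memvar,\memvar'$, if $\assert{\pre{\loccom}{\memann}}{\memvar}$ and this call admits a derivation of height at most $n$, then $\assert{\post{\loccom}{\memann}}{\memvar'}$. The induction runs along a family of approximants $\memcom'_k \in \Memcom$ that cap recursion at depth $k$. For $k=0$, $\memcom'_0$ maps every procedure name to a non-terminating command $\bot \triangleq \wwhilei{\mathit{true}}{P}{\wskip}$ (for an arbitrary invariant $P$). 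Since no call in $\memcom'_0$ admits a finite derivation, $\hoarectx{\memcom'_0}{\memann}$ holds vacuously, and instantiating~(\ref{hyp:lemma1}) with $\memcom'=\memcom'_0$ yields $\hoared{\pre{\loccom}{\memann}}{\body{\loccom}{\memcom}}{\post{\loccom}{\memann}}{\memcom'_0}$, covering exactly the body executions that fire no nested call.

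For the inductive step, $\memcom'_{k+1}$ caps recursion at depth $k{+}1$: executions that would require deeper nesting are forced to diverge, while all shallower behaviour agrees with $\memcom$. Two correspondence lemmas then drive the argument. First, every terminating execution of $\wcall{\loccom}$ in $\memcom'_{k}$ corresponds to a $\memcom$-derivation of bounded height, so the induction hypothesis $H(k)$ gives $\hoarectx{\memcom'_{k}}{\memann}$; feeding this into~(\ref{hyp:lemma1}) with $\memcom'=\memcom'_{k}$ yields $\hoared{\pre{\loccom}{\memann}}{\body{\loccom}{\memcom}}{\post{\loccom}{\memann}}{\memcom'_{k}}$. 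Second, any $\memcom$-derivation of the body of height at most $n$---whose nested calls therefore have bounded depth---can be replayed in $\memcom'_{k}$ for $k$ large enough, transferring the postcondition back to the original execution and establishing $H(n)$. Since every terminating execution has some finite height, $\forall n.\,H(n)$ gives $\hoared{\pre{\loccom}{\memann}}{\body{\loccom}{\memcom}}{\post{\loccom}{\memann}}{\memcom}$ for every $\loccom$, hence $\hoarectx{\memcom}{\memann}$.

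The main obstacle is the construction of the capped environments $\memcom'_{k}$ together with these two correspondence lemmas. Because a procedure environment is merely a map $\Loccom\to\Com$ and the deep-embedded call rule re-enters the \emph{same} environment without any depth counter, the decrease of the induction measure lives only in the derivation tree and must be synchronised by hand with the approximation level $k$; showing that bounded-height $\memcom$-executions and terminating $\memcom'_{k}$-executions coincide is essentially an adequacy/continuity argument for the big-step semantics. Once it is in place, the remaining work is the routine, rule-by-rule replay of the other command forms, and the soundness of~(\ref{hyp:lemma1}) follows.
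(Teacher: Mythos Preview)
Your proposal is correct and follows essentially the same approach as the paper: the paper's ``$k$-inliners''---which inline procedure bodies $k$ times and replace deeper calls by nonterminating loops---are exactly your capped environments $\memcom'_k$, and the paper's induction on the number of inlinings corresponds to your induction on derivation height synchronised with the approximation level. The only cosmetic difference is that the paper organises the induction around the inlining depth $k$ directly, whereas you phrase it via the statement $H(n)$ on derivation heights; the two correspondence lemmas you identify as the ``main obstacle'' are precisely the properties of $k$-inliners the paper alludes to.
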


\begin{proof}
Any given terminating execution traverses a finite number of procedure calls
(over all procedures)
that can be replaced by inlining the bodies a sufficient number of times.
We first formalize a theory of $k$-inliners (that inline procedure bodies 
a finite number of times $k\ge 0$ and replace deeper calls 
by nonterminating loops) and prove their properties.
Relying on this elegant theory, the proof of the lemma proceeds by induction on 
the number of procedure inlinings.
  \qed
\end{proof}

From that, we can establish the main result of this section. Theorem \ref{hoare:recursion}, 
taken from \cite[Th. 4.2]{DBLP:series/txcs/AptBO09} states that
$\hoared{P}{\com}{Q}{\memcom}$ holds if
assumption (\ref{hyp:lemma1}) holds and if 
the validity of contracts of $\memann$ for $\memcom$ 
implies the Hoare triple itself.
This theorem is the basis for modular verification of Hoare Triples,
as done for instance in Hoare Logic \cite{DBLP:journals/cacm/Hoare69,DBLP:books/daglib/0070910}
or verification condition generation. 

\begin{theorem}[Recursion]
  \label{hoare:recursion}
  Given a procedure environment $\memcom$
  and a contract environment $\memann$ such that 
  \begin{gather*}
    \forall \memcom' \in \Memcom.\ 
    \hoaredctx{\pre{\loccom}{\memann}}
    {\body{\loccom}{\memcom}}
    {\post{\loccom}{\memann}}{\forall \loccom \in \Loccom, \memcom'}{\hoarectx{\memcom'}{\memann}}, \ \mbox{and}\\    
    \hoaredctx{P}{\com}{Q}{\memcom}{\hoarectx{\memcom}{\memann}},
  \end{gather*}
  we have \enskip
  $
    \hoared{P}{\com}{Q}{\memcom}.
  $
\end{theorem}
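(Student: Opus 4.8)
The plan is to recognize that Theorem~\ref{hoare:recursion} is essentially a corollary of Lemma~\ref{hoare:recursionproc}, obtained by chaining the lemma's conclusion with the second premise; no new induction is needed. The first hypothesis of the theorem is \emph{syntactically identical} to the assumption~(\ref{hyp:lemma1}) of Lemma~\ref{hoare:recursionproc}: it asserts that for every environment $\memcom'$ in which all contracts of $\memann$ are valid (that is, $\hoarectx{\memcom'}{\memann}$ holds), each procedure body $\body{\loccom}{\memcom}$ satisfies its own contract $(\pre{\loccom}{\memann},\post{\loccom}{\memann})$. I would therefore begin by applying Lemma~\ref{hoare:recursionproc} to this premise, which immediately yields contract validity for the \emph{actual} environment, namely $\hoarectx{\memcom}{\memann}$.

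The second step is then a plain modus ponens. The remaining premise of the theorem is the conditional triple $\hoaredctx{P}{\com}{Q}{\memcom}{\hoarectx{\memcom}{\memann}}$, which unfolds to the implication ``$\hoarectx{\memcom}{\memann}$ implies $\hoared{P}{\com}{Q}{\memcom}$''. Discharging its antecedent with the contract validity just obtained from Lemma~\ref{hoare:recursionproc} gives the conclusion $\hoared{P}{\com}{Q}{\memcom}$, finishing the argument.

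I expect the only delicate point to be bookkeeping rather than mathematics: one must verify that the quantifier over $\memcom'$ and the placement of the hypothesis $\hoarectx{\memcom'}{\memann}$ in the first premise match~(\ref{hyp:lemma1}) exactly, so that the lemma applies with no reshaping of the statement. All the genuine difficulty---reducing a terminating recursive execution to finitely many inlinings of procedure bodies and inducting on the inlining depth---has already been absorbed into the proof of Lemma~\ref{hoare:recursionproc}. Consequently I anticipate no real obstacle here beyond this matching step and the unfolding of the contextual-triple notation.
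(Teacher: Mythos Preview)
Your proposal is correct and matches the paper's approach exactly: the paper's proof is simply ``By Lemma~\ref{hoare:recursionproc}'', which is precisely the two-step argument you describe (apply the lemma to obtain $\hoarectx{\memcom}{\memann}$, then discharge the antecedent of the second premise). Your elaboration of the bookkeeping is accurate and there is nothing to add.
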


\begin{proof}
  By Lemma \ref{hoare:recursionproc}.\qed
\end{proof}

We refer the reader to the \coq development,
more precisely the results 
\verb|recursive_proc| and
\verb|recursive_hoare_triple| in file
\verb|Hoare_Triple.v| for complete proofs of 
Lemma~\ref{hoare:recursionproc} and
Theorem~\ref{hoare:recursion} for \langname.
To the best of our knowledge, this is the first mechanized proof of 
these classical results. 

An interesting corollary can
be deduced from Theorem~\ref{hoare:recursion}.

\vspace{-2mm}
\begin{cor}[Procedure Recursion]
  \label{hoare:procrecursion}
  Given a procedure environment $\memcom$
  and a contract environment $\memann$ such that 
  \begin{equation*}
    \forall \memcom' \in \Memcom.\ 
    \hoaredctx{\pre{\loccom}{\memann}}
    {\body{\loccom}{\memcom}}
    {\post{\loccom}{\memann}}{\forall \loccom \in \Loccom, \memcom'}{\hoarectx{\memcom'}{\memann}},
  \end{equation*}
  we have \enskip
  $
     \forall \loccom \in \Loccom. \ \hoared{\pre{\loccom}{\memann}}
    {\body{\loccom}{\memcom}}
    {\post{\loccom}{\memann}}{\memcom}.
  $
\end{cor}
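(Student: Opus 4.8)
The plan is to obtain the corollary as a per-procedure instantiation of Theorem~\ref{hoare:recursion}. Fix an arbitrary $\loccom \in \Loccom$ and apply the theorem with $P := \pre{\loccom}{\memann}$, $\com := \body{\loccom}{\memcom}$ and $Q := \post{\loccom}{\memann}$. With this choice the theorem's conclusion is precisely $\hoared{\pre{\loccom}{\memann}}{\body{\loccom}{\memcom}}{\post{\loccom}{\memann}}{\memcom}$, i.e. the $\loccom$-component of the corollary's goal, so the whole argument reduces to discharging the theorem's two premises under the single hypothesis we are given.

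The first premise of Theorem~\ref{hoare:recursion} is syntactically identical to the hypothesis of the corollary, hence it holds with nothing to prove. All the work is in the second premise, which under the above instantiation states that $\hoarectx{\memcom}{\memann}$ implies $\hoared{\pre{\loccom}{\memann}}{\body{\loccom}{\memcom}}{\post{\loccom}{\memann}}{\memcom}$. I would establish this by assuming $\hoarectx{\memcom}{\memann}$ and then specialising the corollary's hypothesis to $\memcom' := \memcom$: this yields $\hoared{\pre{\loccom'}{\memann}}{\body{\loccom'}{\memcom}}{\post{\loccom'}{\memann}}{\memcom}$ for every $\loccom' \in \Loccom$, and the instance $\loccom' = \loccom$ is exactly the consequent we need. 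Since $\loccom$ was arbitrary, collecting these instances gives the universally quantified statement of the corollary.

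The only step that requires care is the discharge of the second premise, because it looks circular at first glance: we appear to be assuming the very triple we want to prove. The subtlety to emphasise is that the circularity is only apparent --- the second premise of the theorem is guarded by the extra antecedent $\hoarectx{\memcom}{\memann}$, so we only have to produce the body triple \emph{under the assumption that all contracts are already valid in} $\memcom$, and that assumption is precisely what lets us feed $\memcom' := \memcom$ into the corollary's hypothesis. It is the recursion theorem (ultimately Lemma~\ref{hoare:recursionproc}) that does the real work of breaking this guard.

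Alternatively, one can bypass the theorem entirely: Lemma~\ref{hoare:recursionproc} already gives $\hoarectx{\memcom}{\memann}$, that is $\hoared{\pre{\loccom}{\memann}}{\wcall{\loccom}}{\post{\loccom}{\memann}}{\memcom}$ for each $\loccom$, and since the only semantic rule for $\wcall{\loccom}$ (Fig.~\ref{fig:oper-sem}) unfolds a call to its body in the same environment, the derivability judgements $\Vdash\eval{\wcall{\loccom}}{\memvar}{\memcom}{\memvar'}$ and $\Vdash\eval{\body{\loccom}{\memcom}}{\memvar}{\memcom}{\memvar'}$ coincide; hence the call triple and the body triple are literally the same predicate, which is the claim.
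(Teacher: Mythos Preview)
Your proposal is correct and your primary route---instantiating Theorem~\ref{hoare:recursion} with $P=\pre{\loccom}{\memann}$, $\com=\body{\loccom}{\memcom}$, $Q=\post{\loccom}{\memann}$ and discharging the second premise by specialising the hypothesis to $\memcom'=\memcom$---is exactly what the paper intends (it gives no explicit proof, merely stating that the corollary is deduced from Theorem~\ref{hoare:recursion}). Your alternative via Lemma~\ref{hoare:recursionproc} together with the observation that the call rule makes $\Vdash\eval{\wcall{\loccom}}{\memvar}{\memcom}{\memvar'}$ and $\Vdash\eval{\body{\loccom}{\memcom}}{\memvar}{\memcom}{\memvar'}$ interderivable is also sound and arguably more direct, but it goes slightly beyond the paper's one-line hint.
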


\vspace{-2mm}
\section{Relational Properties}
\vspace{-2mm}
\label{sec:rel-prop}
Relational properties can be seen as an extension of
Hoare triples.
But, instead of linking one program with two properties, the pre- and
postconditions, relational
properties link $n$ programs to two properties, called relational assertions.
We define a \emph{relational assertion} 
as a predicate taking a sequence of memory states
and returning a first-order logic formula.
We use metavariables $\rela{P}, \rela{Q}, ...$
to range over the set of relational assertions, denoted $\rAssertion$.
As a simple example of a relational assertion, we might say
that two states bind location $x_3$ to the same value. This would be
stated as follows:
$\lambda (\memvar_1,\memvar_2). \memvar_1(3) = \memvar_2(3)$.

A \emph{relational property} is a property about $n$
programs $c_1, ..., c_n$, stating that if each program $c_i$
starts in a state $\memvar_i$ and ends in a state $\memvar'_i$ such
that $\rassert{P}{\memvar_1, ..., \memvar_n}$ holds,
then $\rassert{Q}{\memvar'_1, ..., \memvar'_n}$ holds,
where $\rela{P}$ and $\rela{Q}$ are relational assertions over $n$ memory states.


We formally define relational
correctness similarly to functional correctness (cf. Def. \ref{def:funct-corr-1}),
except that we now use sequences of memory states and commands of {equal length}.
We denote by  $\sequence{u}{\nat}$ 
a sequence of elements $(u_k)^\nat_{k=1}=(u_1,\dots,u_n),$  
where
$k$ ranges from $1$ to $\nat$.
If $\nat \le 0$, $\sequence{u}{\nat}$ is the empty sequence denoted $[\ ]$.

\begin{definition}[Relational Correctness] \label{def:rela}
Let $\memcom$ be a procedure environment,
$\sequence{\com}{\nat}$  a sequence of $\nat$ commands ($\nat \in \Natset^*$),
and $\rela{P}$ and $\rela{Q}$ two relational assertions over $\nat$ states. 
The relational correctness of $\sequence\com{n}$ with respect to $\rela{P}$ and
$\rela{Q}$, denoted
$\rhoared{P}{\sequence\com{n}}{Q}{\memcom}$, is defined as follows:
\begin{gather*}
\rhoared{P}{\sequence\com{n}}{Q}{\memcom} \triangleq\\
\forall \sequence{\memvar}{\nat}, \sequence{\memvar'}{\nat}.\
    \rassert{P}{\sequence{\memvar}{\nat}} 
     \land
    (\evalrr{\com}{\memvar}{\memvar'}{\psi}{\nat})
    \Rightarrow
    \rassert{Q}{\sequence{\memvar'}{\nat}}.
\end{gather*}
\end{definition}

This notation generalizes the one proposed by Benton 
\cite{DBLP:conf/popl/Benton04}
for relational properties 
linking two 
commands:
$  \rhoared{P}{c_1\sim c_2}{Q}{\memcom}. $
As Benton's work mostly focused on comparing equivalent programs, using symbol $\sim$
was quite natural.
In particular, Benton's work would not be practical for verification of 
relational properties with several calls such as transitivity mentioned in Sec.~\ref{sec:intro}.

\begin{figure}[tb]
\centering
\begin{minipage}{12cm}
	\normalsize
\begin{equation*}\label{notation:ben}
\memcom :
\,\,\{\rela{P}\}\,\,
\cswI
\,\,\sim\,\,
\cswII
\,\,\{\rela{Q}\},
\end{equation*}
\vspace{-6mm}
\begin{gather*}
  \rela{P}  \triangleq\
  \lambda\memvar_1\memvar_2.\ 
  \memvar_1(\memvar_1(1)) = \memvar_2(\memvar_2(1)) \land
  \memvar_1(\memvar_1(2)) = \memvar_2(\memvar_2(2)) \land \\ 
  \memvar_1(1) \neq \memvar_1(2) \land  
  \memvar_2(1) \neq \memvar_2(2) \land   
  \memvar_1(1) > 3 \land
  \memvar_1(2) > 3 \land
  \memvar_2(1) > 2 \land
  \memvar_2(2) > 2,
\end{gather*}
\begin{equation*}
\rela{Q}\triangleq
\lambda\memvar'_1\memvar'_2.\ 
  \memvar'_1(\memvar'_1(1)) = \memvar'_2(\memvar'_2(1)) \land
  \memvar'_1(\memvar'_1(2)) = \memvar'_2(\memvar'_2(2)).
\end{equation*}
\end{minipage}
\vspace{-1mm}
\caption{A relational property for  $\mathcal{L}$ programs $\cswI$ and $\cswII$ 
of Fig.~\ref{fig:ex-rel-prop-swaps-self}.
}
\vspace{-5mm}
\label{fig:ex-rel-prop-swaps}
\end{figure}

\begin{example}[Relational property]
\label{ex:rel-prop}
Figure~\ref{fig:ex-rel-prop-swaps} formalizes the relational property $\Rsw$ for  $\mathcal{L}$ programs $\cswI$ and $\cswII$ 
discussed in Ex.~\ref{ex:motiv-ex}.
Recall that $\Rsw$ 
(written in Fig.~\ref{fig:ex-rel-prop-swaps} 
in Benton's notation) states that both programs
executed
from two states named $\memvar_1$ and $\memvar_2$ having the same values in
$*\locval_1$ and $*\locval_2$ will end up in two states $\memvar'_1$ and $\memvar'_2$ also
having the same values in these locations. Notice that the initial
state of each program needs separation hypotheses (cf. the second line of the definition of $\rela{P}$). 
Namely, $\locval_1$ and $\locval_2$ must point to different locations 
and must not point
to $\locval_1$, $\locval_2$ or, for $\cswI$, to $\locval_3$ for the property to hold.
This relational property is formalized 
in
the \coq development in file {\tt Examples.v}.
\qed
\end{example}

\vspace{-2mm}
\section{Verification Condition Generation for Hoare Triples}
\vspace{-2mm}
\label{sec:verif-cond-gener}
A standard way~\cite{Floyd1967} for verifying that a Hoare triple holds is to use
a verification condition generator (VCGen).
In this section, we formalize a VCGen for Hoare triples
and show that it is correct,
in the sense that if all verification conditions that it generates
are valid, then the Hoare triple is valid according to Def.~\ref{def:funct-corr-1}.


\vspace{-3mm}
\subsection{Verification Condition Generator}
\vspace{-2mm}
\label{vcgdef}

We have chosen to split the VCGen in three steps,
as it is commonly done \cite{Frama-C}:

\begin{itemize}
\item function $\Tcn$ generates the main verification condition,
expressing that the postcondition holds in the final state,
assuming auxiliary annotations hold;
\item function $\Tan$ generates auxiliary verification conditions stemming from
  assertions, loop invariants, and preconditions of called procedures;
\item finally, function $\Tfn$ generates verification conditions for the
auxiliary procedures that are called by the main program, to ensure that
their bodies respect their contracts.
\end{itemize}

\begin{figure}[tb]
\centering
\begin{minipage}{12cm}
\begin{align*}
  \Tcnf{\wskip}{\mem}{\memann}{f} & \triangleq  \forall \mem'.\, \mem' = \mem \Rightarrow f(\mem')\\
  \Tcnf{\locval_i := \aexp}{\mem}{\memann}{f} & \triangleq
  \forall \mem'.\, \mem' = \setmem{\mem}{i}{\EAexpf{\aexp}{\mem}}\Rightarrow f(\mem')\\
  \Tcnf{*\locval_i := \aexp}{\mem}{\memann}{f} & \triangleq
  \forall \mem'.\, \mem' = \setmem{\mem}{\mem(i)}{\EAexpf{\aexp}{\mem}} \Rightarrow f(\mem')\\
  \Tcnf{\wassert{P}}{\mem}{\memann}{f} & \triangleq
  \forall \mem'.\, \mem' = \mem \land P(\mem) \Rightarrow f(\mem')\\
  \Tcnf{\com_0;\com_1}{\mem}{\memann}{f} & \triangleq
                        \Tcnf{\com_0}{\mem}{\memann}{\lambda \mem'.\,
                          \Tcnf{\com_1}{\mem'}{\memann}{f}}\\
  \begin{split}
  \Tcnf{\wif{\bexp}{\com_0}{\com_1}}{\mem}{\memann}{f} & \triangleq
  (\EBexpf{\bexp}{\mem} \Rightarrow \Tcnf{\com_0}{\mem}{\memann}{f}) \land \\
                          & \qquad (\neg \EBexpf{\bexp}{\mem} \Rightarrow \Tcnf{\com_1}{\mem}{\memann}{f})
  \end{split}\\
  \Tcnf{\wcall{\loccom}} {\mem}{\memann}{f} & \triangleq
          \pre{\loccom}{\memann}(\mem) \Rightarrow (\forall \mem'.\, \post{\loccom}{\memann}(\mem') \Rightarrow f(\mem'))\\
  \Tcnf{\wwhilei{\bexp}{inv}{\com}}{\mem}{\memann}{f} & \triangleq
        inv(\mem) \Rightarrow \\ & \qquad
        (\forall \mem'.\, inv(\mem') \land \neg(\EBexpf{\bexp}{\mem'}) \Rightarrow f(\mem'))
\end{align*}
\end{minipage}
\vspace{-1mm}
\caption{Definition of function $\Tcn$ generating the main verification condition. 
}
\label{fig:def-Tc}
\end{figure}

\begin{definition}[Function $\Tcn$ generating the main verification condition]
\label{def:tc-fun}
Given a command $\com$, a memory state $\sigma$ representing the
state before the command, a contract environment $\memann$, and an assertion $f$,
function $\Tcn$ returns a formula defined by case analysis on $\com$ as shown in
Fig.~\ref{fig:def-Tc}.
\end{definition}

Assertion $f$ represents the postcondition we want to verify after the command executed from state $\mem$.
For each command, except sequence and branch, a fresh memory state $\mem'$ is
introduced and related to the current memory state $\mem$.
The new memory state is given as parameter to $f$.
For $\wskip$, which does nothing, both
states are identical. For assignments, $\mem'$ is simply the update of $\mem$. An assertion
introduces a hypothesis over $\mem$ but leaves it unchanged.
For a sequence,
we simply compose the conditions, that is, we check that the final state of $c_0$
is such that $f$ will be verified after executing $c_1$. For a conditional, we  check
that if the condition evaluates to true, the \emph{then} branch will ensure the postcondition,
and that otherwise the \emph{else} branch will
ensure the postcondition. The rule for calls simply assumes that $\mem'$ verifies $\post{y}{\phi}$. Finally,
$\Tcn$ assumes that, after a loop, $\sigma'$ is a state where the loop condition is
false and the loop invariant holds.
As for an assertion, the callee's precondition and the loop invariant
are just assumed to be true; function
$\Tan$, defined below, generates the corresponding proof obligations.

\begin{example}
\label{ex:Tc}	
For $\com \triangleq \wskip;\locval_1:= 2$, and
$f \triangleq \lambda \mem.\ \mem(1) = 2$, we have:
\begin{equation*}
  \Tcnf{\com}{\mem}{\memann}{f} \equiv
  \forall \mem'_1. \mem = \mem'_1 \Rightarrow
  (\forall \mem'_2. \mem'_2 = \setmem{\mem'_1}{1}{2} \Rightarrow
  \mem'_2(1) = 2). \qquad \qed
\end{equation*}

\end{example}


\begin{figure}[tb]
\centering
\begin{minipage}{12cm}
\begin{align*}
  \Tanf{\wskip}{\mem}{\memann} & \triangleq True\\
  \Tanf{\locval_i := \aexp}{\mem}{\memann} & \triangleq True\\
  \Tanf{*\locval_i := \aexp}{\mem}{\memann} & \triangleq True\\
  \Tanf{\wassert{P}}{\mem}{\memann} & \triangleq P(\mem)\\
  \begin{split}
    \Tanf{\com_0;\com_1}{\mem}{\memann} & \triangleq \Tanf{\com_0}{\mem}{\memann} \land \\
    & \Tcnf{\com_0}{\mem}{\memann}{\lambda \mem'. (\Tanf{\com_1}{\mem'}{\memann})}
  \end{split}\\
  \begin{split}
    \Tanf{\wif{\bexp}{\com_0}{\com_1}}{\mem}{\memann} & \triangleq
    \EBexpf{\bexp}{\mem} \Rightarrow \Tanf{\com_0}{\mem}{\memann} \land \\
    & \neg(\EBexpf{\bexp}{\mem}) \Rightarrow \Tanf{\com_1}{\mem}{\memann}
  \end{split}\\
  \Tanf{\wcall{\loccom}}{\mem}{\memann} & \triangleq \pre{\loccom}{\memann}(\mem)\\
  \begin{split}
  \Tanf{\wwhilei{\bexp}{inv}{\com}}{\mem}{\memann} & \triangleq inv(\mem) \land \\
  & (\forall \mem', inv(\mem')  \land
    \EBexpf{\bexp}{\mem'} \Rightarrow \Tanf{c}{\mem'}{\memann}) \land \\
  & (\forall \mem' , inv(\mem') \land  \EBexpf{\bexp}{\mem'}\Rightarrow
                     \Tcnf{\com}{\mem'}{\memann}{inv})
 \end{split}
\end{align*}
\end{minipage}
\vspace{-1mm}
\caption{Definition of function $\Tan$ generating auxiliary verification conditions.}
\label{fig:def-Ta}
\end{figure}

\vspace{-3mm}
\begin{definition}[Function $\Tan$ generating the auxiliary verification conditions]
\label{def:ta-fun}
Given a command $\com$, a memory state $\mem$ representing the
state before the command, and a contract environment $\memann$,
function $\Tan$ returns a formula defined by case analysis on $\com$
as shown in Fig.~\ref{fig:def-Ta}.
\end{definition}

Basically, $\Tan$ collects all assertions, preconditions of called procedures, as
well as invariant establishment and preservation, and lifts the corresponding
formulas to constraints on the initial state $\mem$ through the use of $\Tcn$.

\smallskip
Finally, we define the function for generating the conditions
for verifying that the body of each procedure defined in $\memcom$
respects its contract defined in $\memann$.
\begin{definition}[Function $\Tfn$ generating the procedure verification condition]
\label{def:tf-fun}
Given two environments $\memcom$ and $\memann$, $\Tfn$ returns
the following formula:
\begin{equation*}
  \begin{split}
  \Tfnf{\memann}{\memcom} \triangleq
  \forall \loccom, \mem.\,\,\pre{\loccom}{\memann}(\mem) \Rightarrow\,\, &
  \Tanf{\body{\loccom}{\memcom}}{\mem}{\memann} \,\land \\
  & \Tcnf{\body{\loccom}{\memcom}}{\mem}{\memann}{\post{\loccom}{\memann}}.
  \end{split}
\end{equation*}
\end{definition}

The VCGen is defined in file {\tt Vcg.v} of the \coq development.
Interested readers will also find a proof 
(in file {\tt Vcg_Opt.v}) of a VCGen 
optimization (not detailed here), 
which 
prevents the size of the generated formulas
from becoming exponential in the number of
conditions in the program~\cite{DBLP:conf/popl/FlanaganS01}, which is a classical problem
for ``naive'' VCGens.

\vspace{-3mm}
\subsection{Hoare Triple Verification}
\vspace{-2mm}

We can now state 
the theorems establishing correctness
of the VCGen. Their proof can be found in file
{\tt Correct.v} of the \coq development.

First, Lemma~\ref{the:vcghoarectx} shows that,
under the assumption of the procedure contracts, a Hoare
triple is valid if for all memory states satisfying the precondition,
the main verification condition and the auxiliary verification
conditions hold.

\begin{lemma}
  \label{the:vcghoarectx}
  Assume the following two properties hold:
  \begin{gather*}
    \forall \mem \in \Sigma, \assert{P}{\mem} \Rightarrow \Tanf{\com}{\mem}{\memann},\\
    \forall \mem \in \Sigma, \assert{P}{\mem} \Rightarrow \Tcnf{\com}{\mem}{\memann}{Q}.
  \end{gather*}
  Then we have \enskip 
  $
    \hoaredctx{P}{\com}{Q}{\memcom}{\hoarectx{\memcom}{\memann}}.
  $
\end{lemma}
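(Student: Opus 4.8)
The plan is to reduce the statement to a single soundness lemma for the pair $(\Tan,\Tcn)$ that is strong enough to be proved by induction, and then to instantiate it. The key observation guiding the generalization is that $\Tcn$ only records the \emph{postcondition} of a procedure call (for a call it merely assumes $\post{\loccom}{\memann}$ on a fresh state once $\pre{\loccom}{\memann}$ holds), whereas the obligation that the \emph{precondition} really holds at the call site is discharged by $\Tan$; the two functions must therefore travel together in the inductive statement. Concretely, I would first establish, under the assumption $\hoarectx{\memcom}{\memann}$, the stronger claim that for every terminating run $\Vdash\eval{\com}{\memvar}{\memcom}{\memvar'}$ and every assertion $f$,
\[
\Tanf{\com}{\memvar}{\memann} \land \Tcnf{\com}{\memvar}{\memann}{f} \Rightarrow f(\memvar').
\]
The quantification over the continuation $f$ is essential: it is what lets the claim survive the sequence and loop cases, where $f$ is swapped for a different assertion along the way.

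I would prove this generalized claim by induction on the derivation of $\Vdash\eval{\com}{\memvar}{\memcom}{\memvar'}$, which is a finite tree even in the presence of recursion and loops, since we only consider terminating runs. The $\wskip$, assignment, and assertion cases are immediate, because the fresh state that $\Tcn$ quantifies over is forced to equal the state produced by the semantics. For a sequence $\com_0;\com_1$ with intermediate state $\memvar''$, I apply the induction hypothesis for $\com_0$ \emph{twice}: once with continuation $\lambda\mem'.\,\Tanf{\com_1}{\mem'}{\memann}$, which yields $\Tanf{\com_1}{\memvar''}{\memann}$, and once with continuation $\lambda\mem'.\,\Tcnf{\com_1}{\mem'}{\memann}{f}$, which yields $\Tcnf{\com_1}{\memvar''}{\memann}{f}$; a final application of the induction hypothesis for $\com_1$ gives $f(\memvar')$. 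The procedure-call case is where modularity pays off and recursion becomes harmless: I do \emph{not} descend into the body. Instead $\Tanf{\wcall{\loccom}}{\memvar}{\memann}$ supplies $\pre{\loccom}{\memann}(\memvar)$, contract validity $\hoarectx{\memcom}{\memann}$ together with termination yields $\post{\loccom}{\memann}(\memvar')$, and feeding both into $\Tcnf{\wcall{\loccom}}{\memvar}{\memann}{f}$ produces $f(\memvar')$.

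The main obstacle is the loop. For $\wwhilei{\bexp}{inv}{\com}$ I must show that the invariant is genuinely maintained across all iterations, and this is exactly what the big-step derivation provides: in the ``body taken'' rule the sub-derivation of the body and the sub-derivation of the remaining loop each carry an induction hypothesis. From $\Tan$ I read off $inv(\memvar)$, the preservation obligation $\Tcnf{\com}{\memvar}{\memann}{inv}$, and the well-formedness $\Tanf{\com}{\memvar}{\memann}$ of the body's conditions; applying the body's induction hypothesis with continuation $inv$ establishes $inv(\memvar'')$. The delicate bookkeeping is then to re-establish $\Tanf{\wwhilei{\bexp}{inv}{\com}}{\memvar''}{\memann}$ and $\Tcnf{\wwhilei{\bexp}{inv}{\com}}{\memvar''}{\memann}{f}$ so that the induction hypothesis for the remaining loop applies. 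This goes through because the universally quantified conjuncts of $\Tan$ do not mention the entry state and so transfer verbatim from $\memvar$ to $\memvar''$, while the consequent of $\Tcn$ is likewise entry-state-independent and is already known to be true, having been extracted from $\Tcnf{\cdot}{\memvar}{\cdot}{f}$ using $inv(\memvar)$. The exit rule finally supplies $inv(\memvar')$ and $\neg\EBexpf{\bexp}{\memvar'}$, and $\Tcn$ closes the goal.

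It then remains to instantiate the generalized claim with $f\triangleq Q$. For arbitrary $\memvar,\memvar'$ with $\assert{P}{\memvar}$ and a terminating run $\Vdash\eval{\com}{\memvar}{\memcom}{\memvar'}$, the two hypotheses of the lemma give $\Tanf{\com}{\memvar}{\memann}$ and $\Tcnf{\com}{\memvar}{\memann}{Q}$, whence $\assert{Q}{\memvar'}$. As $\memvar,\memvar'$ were arbitrary, this is precisely $\hoared{P}{\com}{Q}{\memcom}$ under the assumption $\hoarectx{\memcom}{\memann}$, i.e. $\hoaredctx{P}{\com}{Q}{\memcom}{\hoarectx{\memcom}{\memann}}$, as required.
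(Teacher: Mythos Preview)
Your proof is correct and matches the paper's approach; the paper records only ``by structural induction over $\com$'', and your write-up is a faithful expansion of that sketch, including the necessary generalization over the continuation $f$ so that the sequence and loop cases compose. Your choice to induct on the big-step derivation rather than on the raw syntax of $\com$ is in fact the right formalization here, since plain structural induction on $\com$ would not directly furnish an induction hypothesis for the residual loop in the $\textsf{while}$-true case.
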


\begin{proof}
  By structural induction over $\com$.
  \qed
\end{proof}

Next, we prove in Lemma~\ref{the:vcghoareproc} that if $\Tfnf{\memann}{\memcom}$ holds, then 
for an arbitrary choice of implementations $\memcom'$ of procedures
respecting the procedure contracts, the body  of each procedure 
$y$ respects its contract.

\begin{lemma}
  \label{the:vcghoareproc}
  Assume that the formula
   $\Tfnf{\memann}{\memcom}$ is satisfied.
  Then we have
  \begin{equation*}
    \forall \memcom' \in \Memcom.\ 
    \hoaredctx{\pre{\loccom}{\memann}}
    {\body{\loccom}{\memcom}}
    {\post{\loccom}{\memann}}{\forall \loccom \in \Loccom,\memcom'}{\hoarectx{\memcom'}{\memann}}.
  \end{equation*}
\end{lemma}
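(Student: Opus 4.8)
The plan is to obtain the conclusion as an almost immediate consequence of Lemma~\ref{the:vcghoarectx}. The crucial point to exploit is that the two premises of that lemma, $\forall \mem.\ \assert{P}{\mem} \Rightarrow \Tanf{\com}{\mem}{\memann}$ and $\forall \mem.\ \assert{P}{\mem} \Rightarrow \Tcnf{\com}{\mem}{\memann}{Q}$, mention only the command $\com$, the contract environment $\memann$, and the assertions $P$ and $Q$, and are completely independent of the procedure environment with respect to which the resulting Hoare triple is stated. This independence is exactly what allows us to prove the triple uniformly for \emph{every} $\memcom' \in \Memcom$ from the single hypothesis $\Tfnf{\memann}{\memcom}$.

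First I would fix an arbitrary $\memcom' \in \Memcom$ and an arbitrary procedure name $\loccom \in \Loccom$, so that the goal reduces to establishing the conditional Hoare triple $\hoaredctx{\pre{\loccom}{\memann}}{\body{\loccom}{\memcom}}{\post{\loccom}{\memann}}{\memcom'}{\hoarectx{\memcom'}{\memann}}$. I then instantiate Lemma~\ref{the:vcghoarectx} with $P := \pre{\loccom}{\memann}$, $\com := \body{\loccom}{\memcom}$, $Q := \post{\loccom}{\memann}$, and procedure environment $\memcom'$.

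To discharge the two premises required by that instantiation, I would unfold $\Tfnf{\memann}{\memcom}$ according to Def.~\ref{def:tf-fun} and specialize its leading quantifier to the fixed $\loccom$ together with an arbitrary state $\mem$. Under the assumption $\pre{\loccom}{\memann}(\mem)$, the hypothesis yields precisely the conjunction $\Tanf{\body{\loccom}{\memcom}}{\mem}{\memann} \land \Tcnf{\body{\loccom}{\memcom}}{\mem}{\memann}{\post{\loccom}{\memann}}$; its left and right conjuncts are exactly the two premises demanded by Lemma~\ref{the:vcghoarectx}. Applying that lemma then delivers the desired triple for $\memcom'$, and since $\memcom'$ and $\loccom$ were arbitrary, the full statement follows.

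I expect no genuinely hard step to remain: the real content, the structural induction on commands showing that the generated verification conditions entail the Hoare triple under the contract assumption, is already packaged inside Lemma~\ref{the:vcghoarectx}. The only subtlety worth flagging is conceptual rather than technical, namely that $\Tan$ and $\Tcn$ abstract each procedure call by its contract in $\memann$ instead of its body, which is why the verification conditions never refer to $\memcom'$; this is also what makes the present lemma exactly the form needed to supply hypothesis~(\ref{hyp:lemma1}) of Lemma~\ref{hoare:recursionproc}.
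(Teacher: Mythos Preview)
Your proposal is correct and follows exactly the same approach as the paper, which simply states ``By Lemma~\ref{the:vcghoarectx}.'' Your write-up merely unpacks that one-line justification in detail: instantiate Lemma~\ref{the:vcghoarectx} with $P := \pre{\loccom}{\memann}$, $\com := \body{\loccom}{\memcom}$, $Q := \post{\loccom}{\memann}$ and the arbitrary $\memcom'$, then discharge its two premises by unfolding Def.~\ref{def:tf-fun}.
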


\begin{proof}
  By Lemma~\ref{the:vcghoarectx}.
  \qed
\end{proof}

Finally, we can establish the main theorem of this section, stating that
the VCGen is correct with respect to our definition of Hoare triples.

\begin{theorem}[Soundness of VCGen]
  \label{hoare:proof}
  Assume that 
we have
$\Tfnf{\memann}{\memcom}$ and 
  \begin{gather*}
    \forall \mem \in \Sigma, \assert{P}{\mem} \Rightarrow \Tanf{\com}{\mem}{\memann}, \\
    \forall \mem \in \Sigma, \assert{P}{\mem} \Rightarrow \Tcnf{\com}{\mem}{\memann}{Q}.
  \end{gather*}
  Then we have
    $\hoared{P}{\com}{Q}{\memcom}$.
\end{theorem}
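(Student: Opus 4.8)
The plan is to assemble the conclusion directly from the two correctness lemmas and the Recursion theorem, observing that the three hypotheses of the statement are exactly what is needed to discharge the two premises of Theorem~\ref{hoare:recursion}. No fresh induction is required here: all the heavy lifting has already been carried out in Lemma~\ref{the:vcghoarectx} (by structural induction on $\com$) and, transitively through Lemma~\ref{the:vcghoareproc}, in Lemma~\ref{hoare:recursionproc} (by induction on the number of procedure inlinings). The present proof is therefore a short composition step.

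First, I would use the two verification-condition hypotheses, namely $\forall \mem.\ \assert{P}{\mem} \Rightarrow \Tanf{\com}{\mem}{\memann}$ and $\forall \mem.\ \assert{P}{\mem} \Rightarrow \Tcnf{\com}{\mem}{\memann}{Q}$, to invoke Lemma~\ref{the:vcghoarectx}. This yields the conditional Hoare triple $\hoaredctx{P}{\com}{Q}{\memcom}{\hoarectx{\memcom}{\memann}}$, that is, the Hoare triple holds provided every procedure satisfies its contract for $\memcom$ and $\memann$. This is precisely the second premise of Theorem~\ref{hoare:recursion}.

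Second, I would feed the remaining hypothesis $\Tfnf{\memann}{\memcom}$ into Lemma~\ref{the:vcghoareproc}, which produces exactly assumption~(\ref{hyp:lemma1}): for an arbitrary choice of implementations $\memcom'$ satisfying the contracts, each procedure body respects its contract. This is the first premise of Theorem~\ref{hoare:recursion}.

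Finally, with both premises established, I would apply Theorem~\ref{hoare:recursion} to obtain $\hoared{P}{\com}{Q}{\memcom}$, which is the goal. The only point requiring attention is that the contract-validity context $\hoarectx{\memcom}{\memann}$ appearing in the conditional triple from Lemma~\ref{the:vcghoarectx} coincides with the one governing the premises of the Recursion theorem; since both are stated relative to the same $\memcom$ and $\memann$, the match is immediate. In short, the theorem is essentially a corollary obtained by chaining the two lemmas into Theorem~\ref{hoare:recursion}, and the genuine difficulty resides in those earlier results rather than in this final assembly.
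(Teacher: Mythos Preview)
Your proposal is correct and follows essentially the same approach as the paper: the paper's proof simply states ``By Theorem~\ref{hoare:recursion} and Lemmas~\ref{the:vcghoarectx} and~\ref{the:vcghoareproc}.'' Your decomposition into discharging the two premises of Theorem~\ref{hoare:recursion} via Lemma~\ref{the:vcghoarectx} (for the conditional Hoare triple) and Lemma~\ref{the:vcghoareproc} (for the contract-adequacy premise) is exactly what is intended, and your remark that the real work lies in the earlier results is apt.
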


\begin{proof}
  By Theorem~\ref{hoare:recursion} and Lemmas \ref{the:vcghoarectx} and \ref{the:vcghoareproc}.
\qed
\end{proof}

\begin{example}
Consider again the command $\crec$, procedure environment $\memcom$, and contract
environment $\memann$ of Ex.~\ref{ex-mult}
(presented in Fig.~\ref{fig:ex1}).
We can apply Theorem~\ref{hoare:proof} to
prove its functional correctness expressed by the following Hoare triple:
\begin{equation*}
  \memcom: \{\lambda \mem. True\}\enskip
  \crec\enskip
   \{\lambda \mem. \mem(2) = \mem(4) \times \mem(3)\}
\end{equation*}
(see command \verb'com_rec' in file {\tt Examples.v}).
\qed
\end{example}

\vspace{-2mm}
\section{Verification of Relational Properties}
\vspace{-2mm}
\label{sec:relat-prop-verif}
In this section, we propose a verification method for relational
properties (defined in Section~\ref{sec:rel-prop})
using the VCGen defined in Section~\ref{sec:verif-cond-gener}
(or, more
generally, any VCGen respecting Theorem~\ref{hoare:proof}).
First, we define the notation $\Tr$ for the recursive call of function
$\Tcn$ on a sequence of commands and memory states:

\begin{definition}[Function $\Tr$]
  \label{def:Trf}
  Given  a sequence of commands $\sequence\com\nat$ and 
  a sequence of memory states $\sequence\mem\nat$,
  a contract environment
  $\memann$ and a predicate $\rela{Q}$ over $\nat$ states,
  function $\Tr$ is defined by induction on $\nat$ 
  as follows.
  \begin{itemize}
  \item Basis: $\nat = 0$.
    \begin{equation*}
      \Trf{[\ ]}{[\ ]}{\memann}{\rela{Q}} \triangleq \rassert{Q}{[\ ]}.
    \end{equation*}
  \item Inductive: $\nat \in \Natset^*$.
    \begin{gather*}
      \Trf{\sequence{\com}{\nat}}
      {\sequence{\mem}{\nat}}{\memann}{\rela{Q}} \triangleq\\
      \Tcnf{\com_\nat}{\mem_\nat}{\memann}
      {\,\,\lambda \mem_\nat'.
        \Trf{\sequence{\com}{\nat-1}}
        {\sequence{\mem}{\nat-1}}{\memann}
        {\,\,\lambda \sequence{\mem'}{\nat-1}. \rassert{Q}{\sequence{\mem'}{\nat}}}}.
    \end{gather*}
  \end{itemize}
\end{definition}

Intuitively, for $\nat=2$, $\Tr$ gives the weakest relational condition that $\mem_1$ and $\mem_2$ must fulfill in
order for $\rela{Q}$ to hold after executing $\com_1$ from $\mem_1$ and
$\com_2$ from $\mem_2$:
$
\Trf{(\com_1, \com_2)}{(\mem_1, \mem_2)}{\memann}{\rela{Q}} \equiv
\Tcnf{\com_2}{\mem_2}{\memann}
{\,\lambda \mem'_2. \Tcnf{\com_1}{\mem_1}{\memann}
	{\,\lambda \mem'_1. \rassert{Q}{\mem_1',\mem'_2}}}.
$

\begin{remark}
  \label{rmk:Trf}
  Assume we have $\nat>0,$ a command $\com_n$,
  a sequence of commands
  $\sequence{\com}{\nat-1}$, and a sequence of memory states
  $\sequence{\mem}{\nat-1}$. 
  From Def.~\ref{def:funct-corr-1}, it follows that
  \begin{gather*}
    \forall \mem_\nat, \mem'_\nat.\
    \rassert{P}{\sequence{\mem}{\nat}}
    \land
    (\Vdash\eval{\com_\nat}{\mem_\nat}{\memcom}{\mem'_\nat})
    \Rightarrow\\
    \Trf{\sequence{\com}{\nat-1}}{\sequence{\mem}{\nat-1}}{\memann}
    {\lambda \sequence{\mem'}{\nat-1}.\rassert{Q}{\sequence{\mem'}{\nat}}}
  \end{gather*}
  is equivalent to
  \begin{equation*}
    \hoared{\lambda \mem_n. \rassert{P}{\sequence{\mem}{\nat}}}
    {\com_\nat}
    { \lambda \mem'_\nat.
      \Trf{\sequence{\com}{\nat-1}}{\sequence{\mem}{\nat-1}}{\memann}
      {\lambda \sequence{\mem'}{\nat-1}.\rassert{Q}{\sequence{\mem'}{\nat}}}}
    {\memcom}.
  \end{equation*}
\end{remark}

\begin{example}[Relational verification condition]
\label{ex:rel-prop-vc}
In order to make things more concrete, we can go back 
to the relational property $\Rsw$
between
two implementations $\cswI$ and $\cswII$ of \texttt{swap} defined in Ex.~\ref{ex:motiv-ex} and 
examine what would be the main verification condition generated by $\Tr$.	
Let $\rela{P}$ and $\rela{Q}$ be defined as in Ex.~\ref{ex:rel-prop}.
In this particular case, we have $n=2$, and $\memann$ is empty  (since
we do not have any function call), thus Def.~\ref{def:Trf} becomes:
\[
  \Trf{(\cswI, \cswII)}{(\mem_1, \mem_2)}{\emptyset}{\rela{Q}} \!=\!
  \Tcnf{\cswII}{\mem_2}{\emptyset}
  {\lambda \mem'_2. \Tcnf{\cswI}{\mem_1}{\emptyset}
    {\lambda \mem'_1. \rassert{Q}{\mem_1',\mem'_2}}}.
\]
We thus start by applying $\Tcn$ over $\cswI$, to obtain, using the rules of
Def.~\ref{def:tc-fun} for sequence and assignment, the following 
intermediate formula:
\begin{align*}
  & \Trf{(\cswI, \cswII)}{(\mem_1, \mem_2)}{\emptyset}{\rela{Q}} =\\
  & \quad\Tcn(\cswII,\mem_2,\emptyset,\\
  & \quad\quad\lambda\mem'_2.
  \forall \mem_3,\mem_5,\mem_7.\\
  & \quad\quad\quad
  \mem_3 = \mem_1[3/\mem_1(\mem_1(1))]\Rightarrow\\
  & \quad\quad\quad
  \mem_5 = \mem_3[\mem_3(1)/\mem_3(\mem_3(2))]\Rightarrow\\
  & \quad\quad\quad
  \mem_7 = \mem_5[\mem_5(2)/\mem_5(3)]\Rightarrow \rassert{Q}{\mem_7,\mem'_2}.
\end{align*}
We can then do the same with $\cswII$ to obtain the final formula:
\begin{align*}
  & \Trf{(\cswI, \cswII)}{(\mem_1, \mem_2)}{\emptyset}{\rela{Q}} =\\
  & \quad\quad\quad
  \forall \sequence{\mem}{8}.\\
  & \quad\quad\quad\quad
  \mem_4 = \mem_2[\mem_2(1)/\mem_2(\mem_2(1))+\mem_2(\mem_2(2))]\Rightarrow\\
  & \quad\quad\quad\quad
  \mem_6 = \mem_4[\mem_4(2)/\mem_4(\mem_4(1))-\mem_4(\mem_4(2))]\Rightarrow\\
  & \quad\quad\quad\quad
  \mem_8 = \mem_6[\mem_6(1)/\mem_6(\mem_6(1))-\mem_6(\mem_6(2))]\Rightarrow\\
  & \quad\quad\quad\quad
  \mem_3 = \mem_1[3/\mem_1(\mem_1(1))]\Rightarrow\\
  & \quad\quad\quad\quad
  \mem_5 = \mem_3[\mem_3(1)/\mem_3(\mem_3(2))]\Rightarrow\\
  & \quad\quad\quad\quad
  \mem_7 = \mem_5[\mem_5(2)/\mem_5(3)]\Rightarrow \rassert{Q}{\mem_7,\mem_8}.
\end{align*}
Here, $\mem_k$ with odd (resp., even) indices result from $\Tcn$ for $\cswI$ (resp., $\cswII$). 
\qed
\end{example}

\smallskip
We similarly define a notation for the 
auxiliary verification conditions
for a sequence of $\nat$ commands.
\begin{definition}[Function $\Tar$]
  \label{def:Tarf}
  Given a sequence of commands
  $\sequence{\com}{\nat}$ and a sequence of memory states $\sequence{\mem}{\nat}$, we define function $\Tar$ as follows:
    \begin{equation*}
      \Tarf{\sequence{\com}{\nat}}{\sequence{\mem}{\nat}}{\memann} \triangleq
      \bigwedge_{i=1}^n \Tanf{\com_i}{\mem_i}{\memann}.
    \end{equation*}
\end{definition}

\begin{remark}
  \label{rmk:Tarf}
  For $\nat>0,$ it trivially follows from Def.~\ref{def:Tarf} that:
  \begin{equation*}
    \Tarf{\sequence{\com}{\nat}}{\sequence{\mem}{\nat}}{\memann}
    \equiv
    \Tanf{\com_\nat}{\mem_\nat}{\memann} \land
    \Tarf{\sequence{\com}{\nat-1}}{\sequence{\mem}{\nat-1}}{\memann}.
  \end{equation*}
\end{remark}

Using functions $\Tr$ and $\Tar$, we can now give the main result
of this paper:
it states that the verification of relational
properties using the VCGen is correct.
\begin{theorem}[{Soundness of relational VCGen}]
  \label{rela:proof}
  For any sequence of commands $\sequence\com\nat$,
  contract environment $\memann$, procedure environment $\memcom$,
  and relational assertions over $\nat$ states $\rela{P}$ and $\rela{Q}$,
  if the following three properties hold:
  \begin{equation}
    \label{hyp:1}
    \Tfnf{\memann}{\memcom},
  \end{equation}
  \begin{equation}
    \label{hyp:2}
    \forall \sequence\mem\nat,
    \rassert{P}{\sequence\mem\nat} \Rightarrow
    \Tarf{\sequence\com\nat}{\sequence\mem\nat}{\memann},
  \end{equation}
  \begin{equation}
    \label{hyp:3}
    \forall \sequence\mem\nat,
    \rassert{P}{\sequence\mem\nat} \Rightarrow
    \Trf{\sequence\com\nat}{\sequence\mem\nat}{\memann}{\rela{Q}},
  \end{equation}
  then we have \enskip
  $
    \rhoared{P}{\sequence\com\nat}{Q}{\memcom}.
  $
\end{theorem}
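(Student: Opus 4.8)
The plan is to proceed by induction on the length $\nat$ of the command sequence, peeling off the last command $\com_\nat$ at each step so as to mirror the recursive structure of $\Tr$ in Def.~\ref{def:Trf}. Unfolding relational correctness (Def.~\ref{def:rela}), the goal becomes: for all $\sequence{\mem}{\nat}$ and $\sequence{\mem'}{\nat}$, if $\rassert{P}{\sequence{\mem}{\nat}}$ holds and each execution $\Vdash\eval{\com_k}{\mem_k}{\memcom}{\mem'_k}$ holds for $1 \le k \le \nat$, then $\rassert{Q}{\sequence{\mem'}{\nat}}$. In the base case $\nat = 0$ the family of executions is empty and, by the basis clause of Def.~\ref{def:Trf}, hypothesis~(\ref{hyp:3}) degenerates to $\rassert{P}{[\ ]} \Rightarrow \rassert{Q}{[\ ]}$, which is exactly the goal; the singleton case $\nat = 1$ could equally serve as a base, as $\Tar$ and $\Tr$ then collapse to $\Tan$ and $\Tcn$ and the claim reduces directly to Theorem~\ref{hoare:proof}.

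For the inductive step ($\nat \ge 1$), assuming the statement for $\nat-1$, I would first fix the last execution, i.e.\ fix states $\mem_\nat, \mem'_\nat$ with $\Vdash\eval{\com_\nat}{\mem_\nat}{\memcom}{\mem'_\nat}$, after which it suffices to prove the relational correctness $\rhoared{P'}{\sequence{\com}{\nat-1}}{Q'}{\memcom}$ of the first $\nat-1$ commands, where I set $\rela{P}' \triangleq \lambda \sequence{\mem}{\nat-1}.\, \rassert{P}{\mem_1, \dots, \mem_{\nat-1}, \mem_\nat}$ and $\rela{Q}' \triangleq \lambda \sequence{\mem'}{\nat-1}.\, \rassert{Q}{\mem'_1, \dots, \mem'_{\nat-1}, \mem'_\nat}$, treating the fixed $\mem_\nat, \mem'_\nat$ as constants. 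To discharge this by the induction hypothesis I must verify its three premises for $\sequence{\com}{\nat-1}$, $\rela{P}'$ and $\rela{Q}'$. Premise~(\ref{hyp:1}) is unchanged. Premise~(\ref{hyp:2}) follows from the theorem's hypothesis~(\ref{hyp:2}) via Remark~\ref{rmk:Tarf}: instantiating~(\ref{hyp:2}) at the sequence $(\mem_1, \dots, \mem_{\nat-1}, \mem_\nat)$ gives $\Tarf{\sequence{\com}{\nat}}{(\mem_1, \dots, \mem_\nat)}{\memann}$, whose conjunct for the first $\nat-1$ commands is precisely $\Tarf{\sequence{\com}{\nat-1}}{\sequence{\mem}{\nat-1}}{\memann}$.

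The crux is establishing premise~(\ref{hyp:3}) for the reduced problem, i.e.\ that $\rela{P}'(\sequence{\mem}{\nat-1})$ implies $\Trf{\sequence{\com}{\nat-1}}{\sequence{\mem}{\nat-1}}{\memann}{\rela{Q}'}$ for every $\sequence{\mem}{\nat-1}$. Keeping $\sequence{\mem}{\nat-1}$ general, I would build a Hoare triple for the single command $\com_\nat$ with precondition $\lambda m.\, \rassert{P}{\mem_1, \dots, \mem_{\nat-1}, m}$ and postcondition $\lambda m'.\, \Trf{\sequence{\com}{\nat-1}}{\sequence{\mem}{\nat-1}}{\memann}{\lambda \sequence{\mem'}{\nat-1}.\, \rassert{Q}{\mem'_1, \dots, \mem'_{\nat-1}, m'}}$. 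Once $\Tar$ and $\Tr$ are unfolded on their last command through Remark~\ref{rmk:Tarf} and the inductive clause of Def.~\ref{def:Trf}, the two VCGen premises of this triple are exactly the instances of the theorem's hypotheses~(\ref{hyp:2}) and~(\ref{hyp:3}); combined with~(\ref{hyp:1}) they let me invoke Theorem~\ref{hoare:proof} to obtain the triple. Finally, Remark~\ref{rmk:Trf} converts this triple into its implicational form, which, once fed the fixed execution $\Vdash\eval{\com_\nat}{\mem_\nat}{\memcom}{\mem'_\nat}$ and the precondition $\rela{P}'(\sequence{\mem}{\nat-1})$, yields the required $\Trf{\sequence{\com}{\nat-1}}{\sequence{\mem}{\nat-1}}{\memann}{\rela{Q}'}$. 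With all three premises in hand, the induction hypothesis gives $\rhoared{P'}{\sequence{\com}{\nat-1}}{Q'}{\memcom}$ and closes the step.

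I expect the main obstacle to be the careful bookkeeping of the quantifier structure rather than any deep argument. The delicate point is to keep $\mem_\nat$ and $\mem'_\nat$ fixed (linked by the $\nat$-th execution) while leaving $\sequence{\mem}{\nat-1}$ universally quantified, so that the Hoare triple for $\com_\nat$ is strong enough to supply premise~(\ref{hyp:3}) of the reduced problem for every initial state of the first $\nat-1$ commands. This interleaving is exactly what Remark~\ref{rmk:Trf} is designed to handle, mediating between the implicational form (an execution hypothesis entailing a $\Tr$ conclusion) and the Hoare-triple form to which Theorem~\ref{hoare:proof} applies; everything else amounts to the routine unfolding of $\Tr$ and $\Tar$ on their last argument.
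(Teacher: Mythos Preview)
Your proposal is correct and follows essentially the same approach as the paper's proof: induction on the length of the command sequence, with the inductive step splitting hypotheses~(\ref{hyp:2}) and~(\ref{hyp:3}) via Remark~\ref{rmk:Tarf} and Def.~\ref{def:Trf}, invoking Theorem~\ref{hoare:proof} on the last command, converting the resulting Hoare triple through Remark~\ref{rmk:Trf}, and then applying the induction hypothesis to the remaining commands. The only cosmetic difference is that the paper first derives the Hoare triple for $\com_{\nat+1}$ and then fixes the last execution, whereas you fix the execution first and then build the triple; the logical content is identical.
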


In other words, a relational property is valid
if all procedure contracts are valid, and, assuming the relational
precondition holds, both the 
auxiliary verification conditions
and the main relational verification condition hold.
We give the main steps of the proof below.
The
corresponding 
\coq formalization is available in file \texttt{Rela.v},
and the \coq proof of Theorem~\ref{rela:proof} is in file \texttt{Correct_Rela.v}.

\begin{proof}
  By induction on the length $\nat$ of the sequence of commands $\sequence\com\nat$.
  \begin{itemize}
  \item Induction basis: $\nat = 0$.
    By Def.~\ref{def:rela}, our goal becomes:
    \begin{equation*}
      \rhoared{P}{\sequence\com0}{Q}{\memcom} \,\,\equiv\,\,
      \rassert{P}{[\ ]} \Rightarrow \rassert{Q}{[\ ]}.
    \end{equation*}
    Indeed, by definition of $\Tr$ and Hypothesis (\ref{hyp:3}),
    $\rassert{P}{[\ ]} \Rightarrow \rassert{Q}{[\ ]}$ holds.
  \item Induction step: assuming the result for $\nat$, we prove it for $\nat + 1$. So, assume we have a sequence of commands  $\sequence\com{\nat +1}$, 
  relational assertions  and  environments 
  respecting  (\ref{hyp:1}), (\ref{hyp:2}), (\ref{hyp:3}) (stated for sequences of $\nat + 1$ elements). We have to prove 
  $\hoared{\hat{P}}{\sequence\com{\nat+1}}{\hat{Q}}{\memcom}$, 
  which, by Def.~\ref{def:rela}, is equivalent to:  
  \begin{equation}
  \label{maingoal}
  \forall \sequence{\memvar}{\nat+1}, \sequence{\memvar'}{\nat+1}.\
  \rassert{P}{\sequence{\memvar}{\nat+1}} 
  \land
  (\evalrr{\com}{\memvar}{\memvar'}{\psi}{\nat+1})
  \Rightarrow
  \rassert{Q}{\sequence{\memvar'}{\nat+1}}.
  \end{equation}


    First, we can deduce from Hypothesis (\ref{hyp:2}) and Remark~\ref{rmk:Tarf}:
    \begin{equation}
      \label{hyp:13}
      \forall \sequence\mem{\nat+1},\,\rassert{P}{\sequence{\mem}{\nat+1}}
      \Rightarrow \Tanf{\com_{\nat+1}}{\mem_{\nat+1}}{\memann},
    \end{equation}
    \begin{equation}
      \label{hyp:17}
      \forall \sequence\mem{\nat+1},\,\rassert{P}{\sequence{\mem}{\nat+1}}
      \Rightarrow
      \Tarf{\sequence{\com}{\nat}}{\sequence{\mem}{\nat}}{\memann}.
    \end{equation}

    By Hypothesis (\ref{hyp:3}) and Def.~\ref{def:Trf}, we have
    \begin{multline}
      \label{hyp:14}
      \qquad \qquad \qquad \qquad \qquad
       \forall \sequence\mem{\nat+1},\,\rassert{P}{\sequence{\mem}{\nat+1}} \Rightarrow \\
      \Tcnf{\com_{\nat+1}}{\mem_{\nat+1}}{\memann}
      {\lambda \mem'_{\nat+1}.
        \Trf{\sequence{\com}{\nat}}{\sequence{\mem}{\nat}}{\memann}
        {\lambda \sequence{\mem'}{\nat}. \rassert{Q}{\sequence{\mem}{\nat+1}}}}.
    \end{multline}

    Using (\ref{hyp:1}), (\ref{hyp:13}) and (\ref{hyp:14}), we can now apply Theorem ~\ref{hoare:proof} (for an arbitrary subsequence $\sequence{\memvar}{\nat}$, 
    that we can thus put in an external universal quantifier)
    to obtain:
    
    \begin{adjustbox}{width=0.96\textwidth}
    \begin{minipage}{130mm}
    \hspace{-10mm}	
    \begin{multline}
    \label{hyp:15}
    \qquad\qquad\qquad\qquad\qquad\qquad\forall \sequence{\memvar}{\nat}.\\ 
    \hspace{-9mm}
    \hoared{\lambda \mem_{\nat+1}.\, \rassert{P}{\sequence{\mem}{\nat+1}}}{\com_{\nat+1}} 
    {\lambda \mem'_{\nat+1}.\, \Trf{\sequence{\com}{\nat}}{\sequence{\mem}{\nat}}{\memann}
    	{\lambda \sequence{\mem'}{\nat}. \rassert{Q}{\sequence{\mem'}{\nat+1}}}}
    {\memcom}.
    \end{multline}
    \end{minipage}
    \end{adjustbox}
    \medskip

    Using Remark \ref{rmk:Trf} and by rearranging the quantifiers and implications, we can rewrite (\ref{hyp:15}) into:
    \begin{multline}
      \label{hyp:16}
      \forall \sigma_{\nat+1}, \sigma_{\nat+1}'.
      \Vdash\eval{\com_{\nat+1}}{\mem_{\nat+1}}{\memcom}{\mem'_{\nat+1}} \Rightarrow\\
      \forall \sequence{\sigma}{\nat}.
      \rassert{P}{\sequence{\mem}{\nat+1}}
      \Rightarrow
      \Trf{\sequence{\com}{\nat}}{\sequence{\mem}{\nat}}{\memann}
      {\lambda \sequence{\mem'}{\nat}. \rassert{Q}{\sequence{\mem'}{\nat+1}}}.
    \end{multline}

    For arbitrary states
    $\sigma_{\nat+1}$ and $\sigma_{\nat+1}'$ such that
    $\Vdash\eval{\com_{\nat+1}}{\mem_{\nat+1}}{\memcom}{\mem'_{\nat+1}}$, 
    using (\ref{hyp:1}), (\ref{hyp:17}) and (\ref{hyp:16}), we can apply the induction hypothesis, and obtain:
    
\vspace{-5mm}    
    \begin{multline*}
      \label{goal:1}
      \forall \sigma_{\nat+1}, \sigma_{\nat+1}'.
      \Vdash\eval{\com_{\nat+1}}{\mem_{\nat+1}}{\memcom}{\mem'_{\nat+1}} \Rightarrow\\
      \hoared{\lambda \sequence{\mem}{\nat}. \rassert{P}{\sequence{\mem}{\nat+1}}}
      {\sequence{\com}{\nat}}
      {\lambda \sequence{\mem'}{\nat}. \rassert{Q}{\sequence{\mem'}{\nat+1}}}
      {\memcom}.
    \end{multline*}
    Finally, by Def.~\ref{def:rela} and by rearranging the quantifiers, we deduce (\ref{maingoal}).   \qed
  \end{itemize}
\end{proof}

\begin{example}
\label{ex:rel-prop-proof}
The relational property of Ex.~\ref{ex:rel-prop} is proven valid 
using the proposed technique based on Theorem~\ref{the:vcghoareproc} in file {\tt Examples.v} 
of the \coq development.
For instance, (\ref{hyp:3}) becomes 
$\forall \mem_1,\mem_2.\,
\rassert{P}{\mem_1,\mem_2} \Rightarrow\Trf{(\cswI, \cswII)}{(\mem_1, \mem_2)}{\emptyset}{\rela{Q}}$,
where the last expression was computed in Ex.~\ref{ex:rel-prop-vc}.
Such formulas---long for a manual proof---are well-treated by automatic solvers.

Notice that in this example we do not need any code transformations or extra separation hypotheses 
in addition to those
anyway needed for the swap functions 
while both programs manipulate the same locations $\locval_1,\locval_2$,
and---even worse---the unknown locations pointed by them can be any locations
$\locval_i$, $i>3$.
\qed
\end{example}

\vspace{-2mm}
\section{Related Work}
\vspace{-2mm}
\label{sec:related}
\paragraph{Relational Property Verification.}

Significant work has been done on relational program verification 
(see~\cite{47rela,next700} for a detailed
state of the art). We discuss below some of the efforts the most closely related to our work.


Various relational logics have been designed as extensions
to Hoare Logic, such as Relational Hoare Logic~\cite{DBLP:conf/popl/Benton04}
and Cartesian Hoare Logic~\cite{SousaD16}.
As our approach, those logics consider
for each command a set of associated memory states
in the very rules of the system,
thus avoiding additional separation assumptions.
Limitations of these logics are often
the absence of support for aliasing
or a limited form of relational properties. For instance,
Relational Hoare Logic supports only relational properties with two commands
and Cartesian Hoare Logic supports only $k$-safety properties (relational properties on
the same command).
Our method has an advanced support of aliasing and
supports a very general definition of relational properties,
possibly between several dissimilar commands.

Self-compositon~\cite{DBLP:journals/mscs/BartheDR11,DBLP:conf/fm/SchebenS14,blatterKGP17}
and its derivations~\cite{DBLP:conf/fm/BartheCK11,ShemerCAV2021,DBLP:conf/esop/EilersMH18}
are well-known appro\-aches to deal with relational properties.
This is in particular due to their flexibility: self-composition
methods can be applied as a preprocessing step to different verification approaches.
For example, self-composition is used in combination with symbolic execution and model checking
for verification of voting functions~\cite{BeckertBormerEA2016}.
Other examples are the use of self-composition in combination with verification condition generation
in the context of the Java language~\cite{GuillaumeCAD20} or the C language~\cite{blatterKGP17,blatterKGPP18}.
In general, the support of aliasing of C programs
in these last efforts is very limited due the problems mentioned earlier.
Compared to these techniques,
where self-composition is applied before the generation of verification
conditions (and therefore requires taking care about separation of memory
states of the considered programs),
our method can be seen as relating the considered programs' semantics
directly at the level of the verification conditions,
where separation of their memory states is already ensured,
thus avoiding the need to take care of this separation explicitly.

%

Finally, another advanced approach for relational verification is the
translation of the relational problem into Horn clauses and
their proof using constraint solving%
~\cite{DBLP:journals/jar/KieferKU18,HiroshCAV2021}.
The benefit of constraint solving lies in the ability to
automatically find relational invariants
and complex self-composition derivations.
Moreover, the translation
of programs into Horn clauses,
done by tools
like \reve\footnote{\url{https://formal.kastel.kit.edu/projects/improve/reve/}},
results in
formulas similar to those generated by our
VCGen.
Therefore, like our approach,
relational verification with constraint solving
requires no additional separation hypothesis in presence of aliasing.

\vspace{-2mm}
\paragraph{Certified Verification Condition Generation.}
In a broad sense, this work continues previous
efforts
in formalization and
mechanized proof of program language semantics, analyzers and
compilers, 
such as~\cite{SF,leroy-formal-2008,herms-certification-2013,BeringerA19,JourdanLBLP15,JungKJBBD18,WilsJacobsCertCProg2021,KrebbersLW14,BlazyMP15,ParthasarathyMuellerSummers21}.
%
%
%
%
%
%
%
%
Generation of certificates (in Isabelle)
for the \textsc{Boogie} verifier is presented 
in~\cite{ParthasarathyMuellerSummers21}.
The certified deductive
verification tool WhyCert~\cite{herms-certification-2013}
comes with a similar soundness result
for its verification condition generator.
Its formalization follows an alternative proof approach,
based on co-induction, while our proof relies  on induction.
WhyCert is syntactically closer to the C language and the
\acsl specification language~\cite{ACSL},
while our proof uses a simplified
language, but with a richer aliasing model.
Furthermore, we provide a formalization and a soundness
proof for relational verification, which was not considered
in WhyCert or in~\cite{ParthasarathyMuellerSummers21}.

To the best of our knowledge, the present work
is the first proposal of
relational property verification based on verification
condition generation realized
for a representative language with procedure calls and aliases
with a full mechanized formalization and proof of soundness in \coq.

\vspace{-3mm}
\section{Conclusion} 
\vspace{-2mm}
\label{sec:conclusion}
We have presented in this paper
a method for verifying relational properties using a verification
condition generator, without relying on code transformations (such as
self-composition) or making additional
separation hypotheses in case of aliasing.
This method has been fully formalized in \coq, and the soundness of recursive
Hoare triple verification using a verification condition generator (itself formally proved correct) for a simple language with procedure calls and aliasing
has been formally established.
Our formalization is well-adapted for proving
possible optimizations of a VCGen
and for using optimized VCGen versions for relational property verification.

This work sets up a basis
for the formalization
of modular verification of relational properties using
verification condition generation. 
We plan to extend it with more features such as the possibility to refer to the values 
of variables before a function call in the postcondition (in order to relate them to the values after the call) and the capacity to rely on relational properties
during the proof of other properties.
Future work also includes 
an implementation of this technique inside a tool like RPP~\cite{blatterKGP17}
in order to integrate it with SMT solvers and to evaluate it on benchmarks.
The final objective would be
to obtain a system similar to the verification of Hoare triples,
namely, having relational procedure contracts, relational assertions,
and relational invariants.
Currently, for relational properties, product programs~\cite{DBLP:conf/fm/BartheCK11}
or other self-composition optimizations~\cite{ShemerCAV2021}
are the standard approach to deal with complex loop constructions.
We expect that user-provided coupling invariants and
loop properties can avoid having to rely on code transformation methods.
Moreover, we expect termination and co-termination
\cite{DBLP:conf/cade/HawblitzelKLR13},\cite{HiroshCAV2021} to be used to
extend the modularity of relational contracts.

\bibliographystyle{splncs04}
\bibliography{biblio}

\clearpage
\appendix

\begin{center}
  {\huge \bf Appendix}
\end{center}
\noindent


\section{Detailed Motivating Example}
\label{app-motiv-ex}

Figure \ref{fig:ex-rel-prop-swaps-self-full} provides a more detailed version 
of the motivating example presented in Section \ref{sec:intro} and Fig.~\ref{fig:ex-rel-prop-swaps-self}.
Programs $\CprogI$ and $\CprogII$ contain, resp., C functions 
\lstinline'sw1' and \lstinline'sw2', where pointers 
\lstinline'x1' and \lstinline'x2' are function parameters and variable
\lstinline'x3' in $\CprogI$ is a local variable. 
This choice is most natural in C.

Recall that we consider a relational property $\Rsw$: 

\begin{itemize}
\item[$\Rsw$:]
both programs,
executed from two states 
in which \lstinline'*x1' has the same value for both programs
and \lstinline'*x2' has the same value for both programs, 
will end up in two states in which each of these locations
also has the same value. 	
\end{itemize}

To prove the target relational property, 
a new composed C program $\CprogIII$ with a C function
\lstinline'sw3' is created 
(see Fig.~\ref{fig:ex-rel-prop-swaps-self-full})
by composing the code of both functions.
To distinguish variables of different programs, variables coming from  $\CprogI$ 
and $\CprogI$ are marked, resp., with a suffix ``\lstinline'_1'''
or ``\lstinline'_2'''.

In the self-composition based approach,
the target relational property for the composed program $\CprogIII$ 
is proved by the Hoare triple  $\left\{{P}\right\}\CprogIII\left\{{Q}\right\}$,
where precondition $P$ and postcondition $Q$ are defined in 
Fig.~\ref{fig:ex-rel-prop-swaps-self-full}.
The definitions are expressed in the \acsl specification language~\cite{ACSL}.
Lines 5--6 in the definition of $P$ 
state that each of \lstinline'*x1' and \lstinline'*x2' has the same value
in the states before the execution of \lstinline'sw1' and \lstinline'sw2'. 
Similarly, lines 5--6 in the definition of $Q$ 
state the same properties
after the execution of \lstinline'sw1' and \lstinline'sw2'.
However, the precondition must also include additional constraints.
Lines 9--11 in the definition of $P$ provide usual preconditions
for the swap function \lstinline'sw1' to be executed correctly:
the input pointers must be valid and separated.
For instance, validity of pointer \lstinline'x1_1' means that  \lstinline'*x1_1' can be safely read and written. 
The separation property \lstinline'\separated(x1_1,x2_1)' means that the locations 
\lstinline'*x1_1' and \lstinline'*x2_1' are disjoint, 
that is, do not share any byte\footnote{Notice that 
this separation property is stronger in C than the non-equality constraint \lstinline'x1_1 != x2_1', which does not exclude that both locations have some bytes in common (if the pointers are not aligned). 
For simplicity, byte-related data representation and alignment 
constraints are not modeled in $\mathcal{L}$, where the separation can be
simply represented by non-equality constraints. This does not  
restrict the representativity of $\mathcal{L}$ for the purpose of our study.}.
Lines 14--16 in the definition of $P$ provide similar preconditions
for the swap function \lstinline'sw2'.
For 
simplicity, we ignore arithmetic overflows in \lstinline'sw2':
the specification and 
verification of properties about the absence of arithmetic overflows
are straightforward and orthogonal to the purpose of this paper.

Notice that thanks to the choice of having pointers 
\lstinline'x1' and \lstinline'x2' as function parameters and variable
\lstinline'x3' in $\CprogI$ as a local variable,
for this version 
we do not need to state explicitly other separation hypotheses
stating that  \lstinline'x1' and \lstinline'x2'  do not refer to 
\lstinline'x1', \lstinline'x2' and, for \lstinline'sw1', \lstinline'x3' themselves.
Indeed, these separation hypotheses\footnote{In $\mathcal{L}$, for simplicity, 
we consider only global variables, therefore, in the 
counterparts $\cswI$ and $\cswII$ in language $\mathcal{L}$, these 
additional separation hypotheses must be explicit (as we show in  
Ex.~\ref{ex:rel-prop} and Fig.~\ref{fig:ex-rel-prop-swaps}).
This slight difference of modeling is intentional in order to show the most
natural version of these functions in C with function parameters and local 
variables rather than with global variables only.} 
are already ensured by the fact
that  \lstinline'x1', \lstinline'x2' and, for \lstinline'sw1', \lstinline'x3'  are
allocated during the call to the C function
(and the pointers  \lstinline'x1' and \lstinline'x2'  are 
valid before the call).
 
\bigskip
The aforementioned parts of $P$ and $Q$ naturally come 
from the relational property \Rsw{} and the preconditions 
of the considered functions: in this sense, they are expected. 
However, they are not sufficient: 
in a real-life language with possible aliasing like C, 
to model the behavior of both programs correctly within the composed program 
and to prove the expected relational property, additional separation 
hypotheses between the variables coming from both programs
$\CprogI$ and $\CprogII$ are required.
They are expressed by lines 20--23 in the definition of $P$ in
Fig.~\ref{fig:ex-rel-prop-swaps-self-full}.

Such additional separation hypotheses become even more complex for
real-life programs, in particular in C, 
with a greater number of pointers and/or 
in the presence of multiple pointers (such as double 
pointers, for instance, \lstinline'int **p'). 
Indeed, the required separation hypotheses 
for the composed program rapidly become extremely hard 
to specify (or to generate)
in order to ensure a sound proof of relational properties 
on the composed program. 

\bigskip
With this definition of  precondition $P$ and postcondition $Q$, the 
code of $\CprogIII$ can be proved to satisfy its contract by 
the deductive verification plugin \WP of
\framac~\cite{Frama-C}.

\begin{figure}[h]
\normalfont
\centering
\begin{tabular}{c|rcl}

\begin{minipage}{5.55cm}
\begin{lstlisting}[stepnumber=0,mathescape]
//$\mbox{\rm C program }\CprogI:$
void sw1(int *x1,int *x2){
  int x3;
  x3  = *x1;
  *x1 = *x2;
  *x2 = x3; 
}
$ $
\end{lstlisting}
\begin{lstlisting}[stepnumber=0,mathescape]
//$\mbox{\rm C program }\CprogII:$
void sw2(int *x1,int *x2){
  *x1 = *x1 + *x2;
  *x2 = *x1 - *x2;
  *x1 = *x1 - *x2;
}  
\end{lstlisting}	
\end{minipage}

&

$\left\{{P}\right\}$

&
		
\begin{minipage}{5.8cm}
\begin{lstlisting}[stepnumber=0,mathescape]
//$\mbox{\rm Composed C program }\CprogIII:$
void sw3(int *x1_1,int *x2_1
    int *x1_2,int *x2_2){

//$\mbox{\rm Code simulating }\CprogI:$
  int x3_1;
  x3_1  = *x1_1;
  *x1_1 = *x2_1;
  *x2_1 = x3_1;

//$\mbox{\rm Code simulating }\CprogII:$
  *x1_2 = *x1_2 + *x2_2;
  *x2_2 = *x1_2 - *x2_2;
  *x1_2 = *x1_2 - *x2_2;
}
\end{lstlisting}	
\end{minipage}

&

$\left\{{Q}\right\}$

\end{tabular}
\\[3mm]
\begin{tabular}{c|rcl}
	
\begin{minipage}{5cm}
\begin{lstlisting}[mathescape]
//$P\mbox{ \rm is defined as follows:}$

//$\mbox{\rm Relation between initial }$
//$\mbox{\rm values of }\CprogI \mbox{ \rm and } \CprogII:$
*x1_1 == *x1_2 &&
*x2_1 == *x2_2 &&

//$\mbox{\rm Preconditions for }\CprogI:$
\valid(x1_1) &&
\valid(x2_1) &&
\separated(x1_1,x2_1) &&

//$\mbox{\rm Preconditions for }\CprogII:$
\valid(x1_2) &&
\valid(x2_2) &&
\separated(x1_2,x2_2) &&

//$\mbox{\rm Extra hypotheses for }$
//$\mbox{\rm a correct simulation by }\CprogIII:$
\separated(x1_1,x1_2) &&
\separated(x1_1,x2_2) &&
\separated(x2_1,x1_2) &&
\separated(x2_1,x2_2)
\end{lstlisting}	
\end{minipage}
	
\qquad&\qquad	
\begin{minipage}{5cm}
\begin{lstlisting}[mathescape]
//$Q\mbox{ \rm is defined as follows:}$

//$\mbox{\rm Relation between resulting }$
//$\mbox{\rm values of }\CprogI \mbox{ \rm and } \CprogII:$
*x1_1 == *x1_2 &&
*x1_1 == *x1_2
















$\mbox{ }$
\end{lstlisting}	
\end{minipage}

\end{tabular}
\vspace{-3mm}
\caption{Two C programs $\CprogI$ and $\CprogII$ swapping \lstinline'*x1' and 
\lstinline'*x2' and the Hoare triple 
$\left\{{P}\right\}\CprogIII\left\{{Q}\right\}$
to prove a relational property between them using 
their composition in C program $\CprogIII$, as well as 
definitions of precondition
$P$ and postcondition $Q$ of $\CprogIII$.}
\vspace{-5mm}
\label{fig:ex-rel-prop-swaps-self-full}
\end{figure}

\clearpage
\section{Complete Semantics of Language $\mathcal{L}$}
\label{app-sem}

\subsection{Evaluation of Arithmetic and Boolean Expressions in $\mathcal{L}$}
\label{app-sem-exp-eval}

We provide
a complete list of rules for evaluation of arithmetic and Boolean expressions 
in $\mathcal{L}$ in Fig.~\ref{fig:exp-eval-full}.
Evaluation of arithmetic and Boolean expressions in $\mathcal L$ is defined by functions
$\EAexp$ and $\EBexp$.
As mentioned above, the subtraction is lower-bounded by 0.
Operations $*\locval_i$ and $\&\locval_i$ have a
semantics similar to the C language, {i.e.} dereferencing and address-of.
Semantics of Boolean expressions is standard~\cite{DBLP:books/daglib/0070910}.

%
%

\begin{figure}[tb]
  \centering
  \begin{minipage}{6cm}
    \begin{align*}
      \EAexpf{\nat}{\mem} & \triangleq \nat\\
      \EAexpf{\locval_i}{\mem} & \triangleq \mem(i)\\
      \EAexpf{*\locval_i}{\mem} & \triangleq \mem(\mem(i))\\
      \EAexpf{\&\locval_i}{\mem} & \triangleq i\\
      \EAexpf{\aexp_1 \opa \aexp_2}{\mem} & \triangleq
                                            \EAexpf{\aexp_1}{\mem} \opa \EAexpf{\aexp_2}{\mem}
    \end{align*}
  \end{minipage}
  \begin{minipage}{4cm}
    \begin{align*}
      \EBexpf{true}{\mem} & \triangleq \True\\
      \EBexpf{false}{\mem} & \triangleq \False\\
      \EBexpf{\aexp_1 \opb \aexp_2}{\mem} & \triangleq
                                            \EAexpf{\aexp_1}{\mem} \opa \EAexpf{\aexp_2}{\mem}\\
      \EBexpf{\bexp_1 \opl \bexp_2}{\mem} & \triangleq
                                            \EBexpf{\bexp_1}{\mem} \opl \EBexpf{\bexp_2}{\mem}\\
      \EBexpf{\neg\bexp}{\mem} & \triangleq  \neg\EBexpf{\bexp}{\mem}
    \end{align*}
  \end{minipage}
  \vspace{-1mm}
  \caption{Evaluation of arithmetic and Boolean expressions in $\mathcal{L}$.}
  \vspace{-5mm}
  \label{fig:exp-eval-full}
\end{figure}

\subsection{Operational Semantics of Commands in $\mathcal{L}$ in $\mathcal{L}$}
\label{app-sem-sper-sem}

We provide
a complete operational semantics of commands in $\mathcal{L}$ in Fig.~\ref{fig:oper-sem-full}.

\begin{figure}[tb]
  \centering
  \begin{minipage}{3cm}
    \begin{equation*}
      \inferrule*
      {}
      {\eval{\wskip}{\memvar}{\memcom}{\memvar}}
    \end{equation*}
  \end{minipage}
  \begin{minipage}{4cm}
    \begin{equation*}
      \inferrule*
      {\EAexpf{\aexp}{\memvar} = \nat}
      {\eval{\locval_i := \aexp}{\memvar}{\memcom}{\memvar[i/n]}}
    \end{equation*}
  \end{minipage}
  \begin{minipage}{4cm}
    \begin{equation*}
      \inferrule*
      {\EAexpf{\aexp}{\memvar} = \nat}
      {\eval{*\locval_i := \aexp}{\memvar}
        {\memcom}\memvar[\memvar(i)/n]}
    \end{equation*}
  \end{minipage}

  \begin{minipage}{5cm}
    \begin{equation*}
      \inferrule*
      {}
      {\eval{\wassert{P}}{\memvar}
        {\memcom}{\memvar}}
    \end{equation*}
  \end{minipage}
  \begin{minipage}{6cm}
    \begin{equation*}
      \inferrule*
      {\EBexpf{\bexp}{\memvar} = \True\\
        \eval{\com_1}{\memvar_1}{\memcom}\memvar_2}
      {\eval{\wif{\bexp}{\com_1}{\com_2}}{\memvar_1}{\memcom}\memvar_2}
    \end{equation*}
  \end{minipage}

  \begin{minipage}{5cm}
    \begin{equation*}
      \inferrule*
      {\eval{\com_1}{\memvar_1}{\memcom}\memvar_2\\
        \eval{\com_2}{\memvar_2}{\memcom}\memvar_3}
      {\eval{\com_1;\com_2}{\memvar_1}{\memcom}\memvar_3}
    \end{equation*}
  \end{minipage}
  \begin{minipage}{6cm}
    \begin{equation*}
      \inferrule*
      {\EBexpf{\bexp}{\memvar} = \False\\
        \eval{\com_2}{\memvar_1}{\memcom}\memvar_2}
      {\eval{\wif{\bexp}{\com_1}{\com_2}}{\memvar_1}{\memcom}\memvar_2}
    \end{equation*}
  \end{minipage}

  \begin{equation*}
    \inferrule*
    {\EBexpf{\bexp}{\memvar_1} = \True\\
      \eval{\com_1}{\memvar_1}{\memcom}\memvar_2\\
      \eval{\wwhilei{\bexp}{P}{\com}}{\memvar_2}{\memcom}\memvar_3}
    {\eval{\wwhilei{\bexp}{P}{\com}}{\memvar_1}{\memcom}\memvar_3}
  \end{equation*}

  \begin{minipage}{6cm}
    \begin{equation*}
      \inferrule*
      {\EBexpf{\bexp}{\memvar} = \False}
      {\eval{\wwhilei{\bexp}{P}{\com}}{\memvar}{\memcom}\memvar}
    \end{equation*}
  \end{minipage}
  \begin{minipage}{5.5cm}
    \begin{equation*}
      \inferrule*
      {\eval{\body{\loccom}{\memcom}}{\memvar_1}{\memcom}\memvar_2}
      {\eval{\wcall{\loccom}}{\memvar_1}{\memcom}\memvar_2}
    \end{equation*}
  \end{minipage}
  \vspace{-1mm}
  \caption{Operational semantics of commands in $\mathcal{L}$.}
  \vspace{-5mm}
  \label{fig:oper-sem-full}
\end{figure}



\end{document}